\definecolor{darkgreen}{rgb}{0,0.5,0}
\newcommand{\Z}{\ensuremath{\mathbb{Z}}}
\newcommand{\Q}{\ensuremath{\mathbb{Q}}}
\newif\ifhidetodos
\title{Finding twin smooth integers by solving Pell equations}
\author{
Jan Buzek \inst{1}\thanks{This work was started as an undergraduate research project at the Washington eXperimental Mathematics Lab (WXML) of the Department of Mathematics at the University of Washington, Seattle.}\and
Junaid Hasan \inst{1}\and 
Jason Liu \inst{1}\and 
Michael Naehrig \inst{2}\and 
Anthony Vigil \inst{1}
}
\institute{
University of Washington, Seattle, WA, USA \\
\and
Microsoft Research, Redmond, WA, USA
}
\begin{document}

\maketitle

\begin{abstract}
Any pair of consecutive $B$-smooth integers for a given smoothness bound $B$ corresponds to a solution $(x,y)$ of the equation $x^2 - 2\Delta y^2 = 1$ for a certain square-free, $B$-smooth integer $\Delta$ and a $B$-smooth integer $y$. This paper describes algorithms to find such twin $B$-smooth integers that lie in a given interval by using the structure of solutions of the above Pell equation.

The problem of finding such twin smooth integers is motivated by the quest for suitable parameters to efficiently instantiate recent isogeny-based cryptosystems. While the Pell equation structure of twin $B$-smooth integers has previously been used to describe and compute the full set of such pairs for very small values of $B$, increasing $B$ to allow for cryptographically sized solutions makes this approach utterly infeasible.

We start by revisiting the Pell solution structure of the set of twin smooth integers. Instead of using it to enumerate all twin smooth pairs, we focus on identifying only those that lie in a given interval. This restriction allows us to describe algorithms that navigate the vast set of Pell solutions in a more targeted way. Experiments run with these algorithms have provided examples of twin $B$-smooth pairs that are larger and have smaller smoothness bound $B$ than previously reported pairs. Unfortunately, those examples do not yet provide better parameters for cryptography, but we hope that our methods can be generalized or used as subroutines in future work to achieve that goal. 
\end{abstract}

\section{Introduction}\label{sec:intro}
Twin smooth integers are pairs of consecutive smooth integers and have been studied for various reasons at least since the 18th century. They have been mentioned by Gauss in the context of computing logarithms of integers, and they can be used to find particular solutions to certain diophantine equations~\cite{lehmer}.  

\begin{definition}[Smooth and twin smooth integers]
Let $B>1$ and $m\in \Z$. The integer $m$ is called \emph{$B$-smooth} if no prime factor of $m$  is larger than $B$. The pair $(m, m+1)$ is called a pair of \emph{twin $B$-smooth integers} if $m\cdot(m+1)$ is $B$-smooth.
\end{definition}
St{\o}rmer~\cite{stormer} proved that for a fixed smoothness bound $B>1$, there is a finite number of twin $B$-smooth integers and that all such pairs can be described via solutions of Pell equations of the form $x^2 - D y^2 = 1$. 
Lehmer~\cite{lehmer} improved upon St{\o}rmer's work in utilizing the structure of the set of solutions to the Pell equation and strongly reduced the number of Pell equations that must be solved in order to find all twin $B$-smooth pairs. Lehmer computed the full set for $B=41$. Unfortunately, the number of Pell equations that must be solved is still exponential in $B$, and therefore finding all $B$-smooth pairs is computationally prohibitive for $B$ larger than 200.

In this paper, we focus on the problem of finding some B-smooth pairs that lie in a certain interval of possibly very large integers, rather than finding all pairs; for example, given a smoothness bound $B$ as well as a bitlength $b$, finding twin $B$-smooth integers that lie in the interval $[2^{b-1}, 2^b)$, i.e. finding such pairs $(m, m+1)$, for which $m$ can be written with $b$ bits. A slight variant of the problem only assumes the bitlength $b$ is fixed and then asks to find twin $B$-smooth integers with $b$ bits and the lowest possible smoothness bound $B$.

\subsubsection{Cryptographic applications.}
The latter variant of the problem is motivated by applications in cryptography. It has occurred recently in the context of selecting parameters for instantiating public key cryptography that uses isogenies between elliptic curves. For example, the digital signature scheme SQISign~\cite{sqisign} requires a large prime number $p$ such that $p-1$ and $p+1$ have large smooth factors. For security reasons, the prime $p$ (and therefore the integer $m$) needs to have a specific minimal size, usually given as a minimal number of bits, to ensure that the schemes cannot be easily broken. 

If one aims to find $p$ such that both $p-1$ and $p+1$ are fully smooth, the problem is to find twin smooth integers $m = (p-1)/2$ and $m+1 = (p+1)/2$ with the additional condition that they sum to a prime number $p = m + (m+1)$. While this was an interesting scenario for the (now broken) key exchange scheme B-SIDH~\cite{b-sidh}, it is not strictly necessary. Allowing a small non-smooth cofactor in $p-1$ and $p+1$ often leads to a lower smoothness bound and a more efficient implementation of the cryptosystem. Still, algorithms for finding (fully) $B$-smooth twins, typically with much shorter bitsize, can be used to construct cryptographically sized primes.

This motivates to look for twin smooths $(m, m+1)$ in a prescribed interval, or more specifically with a given bitlength $b$, i.e. $\lfloor\log_2(m)\rfloor + 1 =b$ Furthermore, the smoothness bound $B$ for which a suitable twin $B$-smooth pair $(m, m+1)$ of the given size needs to be found, has a direct impact on the efficiency of cryptographic primitives. The smaller $B$ is, the more efficient the protocol becomes.

\subsubsection{Algorithms for finding twin smooth integers.}
As mentioned above, Lehmer's paper~\cite{lehmer} suggests to enumerate all twin $B$-smooth pairs for fixed $B$ by solving the Pell equation $x^2 - 2\Delta y^2 = 1$ for an exponential number of values for $\Delta$. This becomes infeasible for even moderately large smoothness bound $B$. Luca and Najman~\cite{luca-najman,luca-najman-err} succeeded in computing all twin 100-smooth pairs with a modified version of Lehmer's approach. This computation was very costly because of the large number of Pell equations to be solved and the solutions' potentially huge sizes. Costello \cite[\S 5.3]{b-sidh} used this approach to search for B-SIDH parameters and reports to have solved all $2^{30}$ Pell equations for $B=113$ (the 30-th prime) in a large computational effort, discovering the largest one to be a 76-bit $113$-smooth pair.
An interesting algorithm that approximates this problem and appears to be much more efficient is presented in~\cite{Conrey}.

\subsubsection{The Conrey-Holmstrom-McLaughlin approach.} 
Conrey, Holmstrom, and McLaughlin propose an algorithm in \cite{Conrey} that quickly generates a large set of moderately-sized twin smooth integers for a given smoothness bound $B$. Experiments for relatively small $B$ demonstrate that their method succeeds in generating almost all $B$-smooth pairs, only missing a few that are often picked up when running the algorithm for a slightly larger $B$. The above example of finding 100-smooth twins was easily run, but missed 37 pairs of the Luca-Najman results. A significantly more costly search for 200-smooth twins recovered all but one of those. 

The algorithm works by observing that two twin smooth integer pairs, $(m, m+1)$ and $(M, M+1)$ with $m < M$, generate a new pair of twin smooths $(\mu, \mu+1)$ if $\mu$ is an integer and $\frac{\mu}{\mu+1} = \frac{m}{m+1}\frac{M+1}{M}$. This can be used to iteratively expand any set of twin smooths. Begin with the set $S$ of integers $m < B$, though $S$ can be any set of integers $m$ such that $(m, m+1)$ forms a smooth pair. At each iteration, take all pairs $m < M$ in $S$ and collect all new $\mu$ of the above form in a set $S'$; set $S = S \cup S'$; if $S' = \phi$, the algorithm stops. The largest member of $S'$ is bounded by the square of the largest member of $S$, so the best case scenario is doubling the number of bits at each iteration. 

While much more efficient in constructing most twin smooth pairs, this algorithm seems to quickly become infeasible when aiming for large integers that require a larger smoothness bound. The main reason is that the cardinality of $S$ balloons rapidly, which makes successive iterations infeasible before reaching a large enough bitlength. Nonetheless, this algorithm is a powerful tool for quickly generating large sets for experimentation, and an interesting avenue for future research, as demonstrated recently in \cite{CHM4SQISign}, where it was used as a subroutine for successfully constructing parameters for SQISign. The paper also reports that the authors ran the full algorithm for $B=547$. The largest $547$-smooth twin smooth pair found in these experiments has 122 bits. 

Moreover, the following algorithms have been used in the context of finding cryptographic parameters. They do not attempt to compute the complete finite set of twin $B$-smooth integers, but instead have the goal of generating just enough ``good'' examples of a specific bit size. Ultimately the goal is to find a pair with prime sum, but the probability of this happening is often large enough within a large enough set of pairs. Specifically, the Prime Number Theorem gives an estimate that random numbers in the cryptographic range (around 256 bits) that we are looking for have a chance on the order of one in a hundred of being prime. In comparison, the chances of random consecutive integers in that range being B-smooth for B at most $2^{19}$ are tiny. Thus assuming there is no interaction between smoothness of $(m, m+1)$ and primality of $2m+1$, an algorithm that efficiently finds smooth pairs and then checks if their sum is prime is an efficient way to find cryptographic parameters, whereas an algorithm that checks whether $(\frac{p-1}{2}, \frac{p+1}{2})$ is a smooth pair for large primes $p$ is not. Therefore, the main task is to find the twin smooth pairs.

\subsubsection{Sieving.}
A simple algorithm to search a given interval is to use a variant of the sieve of Eratosthenes to list all $B$-smooth integers $m$ in the interval and check for each whether $m+1$ is also smooth. This is inefficient because for large $m$, the probability that $m+1$ is also smooth is very low and the chance to encounter such pairs is so low that searchable intervals are too small and likely do not even contain a solution or the search space is too large to be covered. Nevertheless, plain sieving has been successfully used as a subroutine in other methods. For example, some search algorithms increase the probability of hitting twin smooth integers by using polynomial parameterizations.

\subsubsection{Sieving with $x^n - 1$.}\label{sec:Poly-sieve}
There are various polynomial parameterizations leading to twin smooth integers that can be exploited by sieving for patterns of smaller smooth integers. A simple polynomial that was found to work well in~\cite{pte}, \cite{b-sidh}, and \cite{CHM4SQISign} is $f(x) = x^n - 1$ for small values of $n$. For example, if $n = 2$, then $(m, m+1) = (f(x), f(x) + 1)$ is a $B$-smooth pair when $(x-1)$, $x$, and $(x+1)$ are all smooth by the factorization $x^2 - 1 = (x - 1)(x + 1)$. One can then use the sieve to find three consecutive smooth integers $x-1, x, x+1$ and obtain the twin pair $(x^2-1, x^2)$ of twice as many bits. Because each of these factors are about the square root of the resulting smooth pair, it is easier to find the smooth triple of small integers. This phenomenon extends similarly to larger values of $n$ such as $4$ or $6$. We give further explanation of the reasons that certain polynomials work well in \S\ref{sec:pell} (see in particular the polynomials $p_n(x)$).

\subsubsection{Sieving with Solutions to the Prouhet-Tarry-Escott Problem.}
The Prouhet-Tarry-Escott (PTE) method is a generalization of the above idea that finding smooth pairs parameterized by polynomials can be easier because the polynomial factors of $(m, m+1)$ required to be smooth are much smaller than the $m$ and $m+1$ themselves. Solutions to the PTE problem correspond to polynomials $f(x)$ such that both $f(x)$ and $f(x) + 1$ factor into linear terms over $\mathbb{Z}$. Given such polynomials, we can search for pairs $(m = f(x), m + 1 = f(x) + 1)$ by checking for smoothness of all of the linear factors of $f$ and $f + 1$ for the same value of $x$. This is again done using a sieve. Because smaller integers are much more likely to be smooth, it is often more efficient to search for many small integers that are all smooth than for a single large pair directly. The PTE method has found many of the best-known pairs of cryptographic size; for details see \cite{pte}. However, this method is restricted in that only a finite number of solutions to the PTE problem are known, and all have degree at most $12$. Furthermore, it only finds pairs of this special form, and it is unclear what type of pairs might be missed.

\subsubsection{Extended Euclidean Algorithm.}
 Another improvement, discussed for example in ~\cite{b-sidh}, uses the extended Euclidean algorithm. It chooses two coprime, $B$-smooth integers $\alpha$, $\beta$ roughly the square root of the size of the desired pair and runs the Euclidean algorithm. This produces an equality of the form $\alpha  s + \beta t = 1$, and we take $| \alpha s |,  | \beta t |$ as our candidate smooth pair. The benefit is that the portions $\alpha$ and  $\beta$ of these numbers are already known to be smooth, increasing the probability that $|\alpha \beta s t|$ will be smooth overall. For more details and a detailed analysis of the probabilities for various sizes of $B$ and $m$, see~\cite{pte} and~\cite{b-sidh}.

\subsubsection{Contributions.}
In this paper, we describe how Pell equations can be used to search for large twin-smooth pairs. Pell equations have been believed to be mainly a theoretical tool useful for ensuring that all pairs have been found; we show that they can also provide an efficient way of searching. We give an algorithm generalizing and explaining the polynomial sieving described above and another based on heuristically choosing which Pell equations to solve. We give experimental evidence for our algorithms through the pairs that they found in our search of the interval $[2^{240}, 2^{256}]$, including pairs that are the largest known for their smoothness bound.  

\subsubsection{Outline.}
In Section \ref{sec:pell}, we explore the structure of the solutions to the Pell equation, deducing some results on explicit polynomial parameterizations from the work of Lehmer. Then we specialize to the case of searching for pairs in a specified large interval and describe how solving Pell equations can be improved tremendously with this assumption. In Section \ref{sec:twins}, we apply this assumption and our results on the structure of solutions to choosing which Pell equations and which of their solutions to consider.

\section{Solutions of the Pell equation}\label{sec:pell}
For a fixed, non-square positive integer $D$, the Pell equation is a Diophantine equation of the form
\begin{equation}\label{eqn:pell}
    x^2 - Dy^2 = 1
\end{equation}
and its solutions are pairs of integers $(x, y)$ that satisfy~\eqref{eqn:pell}. The Pell equation was studied in the seventh century by Brahmagupta, and since at least the seventeenth century by mathematicians including Euler, Fermat, and Lagrange. We give a brief overview of the results that interest us, generally omitting proofs as these are readily available in resources such as \cite{lehmer}, \cite{jacobson-williams} and \cite{lenstra02}.

Equation~\eqref{eqn:pell} always has the trivial solutions $(\pm 1, 0)$, but our main interest lies in describing all non-trivial solutions $(x, y)$ and without loss of generality, we focus on those with $x, y > 0$. The following well-known result describes the structure of all such solutions (see, e.g.~\cite[Chapter 1]{jacobson-williams}, \cite[\S1]{lenstra02} and \cite[\S2]{lehmer}).

\begin{theorem}\label{thm:Pell_Solutions}
Let $D>0$ be an integer that is not a perfect square. Then, the Pell equation $x^2 - Dy^2 = 1$ has an infinite number of integer solutions $(x,y)$ with $x,y > 0$. The solution with the smallest value of $x+y\sqrt{D}$ among them is called the \emph{fundamental solution} and is denoted by $(x_1, y_1)$. All nontrivial solutions as above can be written as $(x_n, y_n)$ for some integer $n \geq 1$, where
\begin{equation}\label{eqn:solutions}
    x_n + y_n \sqrt{D} = (x_1 + y_1 \sqrt{D})^n.
\end{equation}
The solution $(x_n, y_n)$ is called the \emph{$n$-th solution}.
\end{theorem}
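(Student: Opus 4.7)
The plan is to view the problem through the algebraic structure of $\Z[\sqrt{D}]$. A pair $(x,y)$ of positive integers solves $x^2 - Dy^2 = 1$ precisely when $\alpha := x + y\sqrt{D}$ satisfies $N(\alpha) := \alpha\bar\alpha = 1$, where $\bar\alpha = x - y\sqrt{D}$ is the Galois conjugate. The norm is multiplicative, so products and quotients of norm-$1$ elements remain norm-$1$; this is Brahmagupta's composition law and is what makes a multiplicative group structure available on the solution set. The whole theorem then says that the positive solutions, under this composition, form an infinite cyclic monoid generated by a single element $\alpha_1 > 1$.

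The main obstacle is producing at least one nontrivial positive solution; everything else is essentially a formal consequence. By Dirichlet's approximation theorem applied to the irrational number $\sqrt{D}$, there are infinitely many pairs of positive integers $(p,q)$ with $|p - q\sqrt{D}| < 1/q$, which forces $|p^2 - Dq^2| < 2\sqrt{D} + 1$. A first pigeonhole yields a fixed nonzero integer $k$ with $p^2 - Dq^2 = k$ for infinitely many such pairs, and a second pigeonhole on the residues modulo $|k|$ supplies two distinct pairs $(p_1,q_1)$ and $(p_2,q_2)$ that are congruent component-wise modulo $|k|$. The quotient
\[
\frac{p_1 + q_1\sqrt{D}}{p_2 + q_2\sqrt{D}} \;=\; \frac{(p_1p_2 - Dq_1q_2) + (p_2q_1 - p_1q_2)\sqrt{D}}{k}
\]
then has both numerator coordinates divisible by $k$ by the congruence condition, so it lies in $\Z[\sqrt{D}]$ with norm $k/k = 1$. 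This produces a nontrivial solution, which one makes positive by changing signs.

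Given existence, I would define the fundamental solution by minimization. The set $T := \{\alpha = x + y\sqrt{D} : (x,y) \text{ a positive solution}\}$ lies in $(1,\infty)$, and the map $\alpha \mapsto x = (\alpha + 1/\alpha)/2$ is continuous and strictly increasing on $(1,\infty)$ with integer values on $T$; hence $T$ is discrete, and the infimum is attained at some $\alpha_1 = x_1 + y_1\sqrt{D}$.

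For the structural statement, fix any $\alpha \in T$ and pick the unique integer $n \geq 1$ with $\alpha_1^n \leq \alpha < \alpha_1^{n+1}$. Since $\alpha_1 \bar\alpha_1 = 1$, we have $\alpha_1^{-n} = \bar\alpha_1^{\,n} \in \Z[\sqrt{D}]$, so $\alpha' := \alpha\,\alpha_1^{-n} = x' + y'\sqrt{D}$ lies in $\Z[\sqrt{D}]$, has norm $1$, and satisfies $1 \leq \alpha' < \alpha_1$. If $\alpha' > 1$, then from $\alpha'\bar\alpha' = 1$ one checks that both $x' = (\alpha' + 1/\alpha')/2$ and $y' = (\alpha' - 1/\alpha')/(2\sqrt{D})$ are positive, so $\alpha' \in T$, contradicting the minimality of $\alpha_1$. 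Hence $\alpha' = 1$ and $\alpha = \alpha_1^n$. The infinitude of solutions in Theorem~\ref{thm:Pell_Solutions} is then immediate, as the powers $\alpha_1^n$ for $n \geq 1$ are pairwise distinct.
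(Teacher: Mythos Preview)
Your argument is the classical Lagrange proof (Dirichlet approximation, two pigeonholes, then the ``division algorithm'' in the norm-$1$ group), and it is essentially correct. One small point you gloss over: after forming the quotient you must check that its $\sqrt{D}$-coefficient $p_2q_1 - p_1q_2$ is nonzero, i.e.\ that the two chosen approximants are not proportional; this is easy to arrange since Dirichlet supplies infinitely many pairs with pairwise distinct ratios $p/q$ (as $\sqrt{D}$ is irrational), but it should be said.

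As for comparison with the paper: there is nothing to compare. The paper does not prove Theorem~\ref{thm:Pell_Solutions}; it explicitly states that proofs of such background results are omitted and refers the reader to standard sources such as \cite{lehmer}, \cite{jacobson-williams}, and \cite{lenstra02}. Your write-up would in fact supply what the paper deliberately leaves out.
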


\subsection{Expressing solutions in terms of the fundamental solution}
Lehmer~\cite[\S2]{lehmer} describes all solutions of \eqref{eqn:pell} in terms of the Lucas function $U_n$. Let $\alpha = x_1 + y_1 \sqrt{D}$ and $\beta = x_1 - y_1\sqrt{D}$. Then, $\alpha\beta=1$ and for all $n \ge 1$, we get $\alpha^n+\beta^n = 2x_n$, $\alpha^n-\beta^n = 2y_n\sqrt{D}$. The Lucas function $U_n$ is defined as 
\begin{equation}\label{eqn:Un}
    U_n = (\alpha^n-\beta^n)/(\alpha-\beta) = y_n/y_1.
\end{equation}
Hence $U_1 = 1$ and the above identities yield recursive formulas to compute $U_n$ for all $n$, since $x_{2n} = 2x_n^2 -1$, $U_{2n} = 2x_n U_n$, $x_{n+1} = x_n x_1 + Dy_n y_1 = x_n x_1 + U_n(x_1^2 - 1)$ and $U_{n+1} = x_n + x_1 U_n$. Therefore, $U_n$ assumes integer values and $y_1 \mid y_n$ for all $n \ge 1$. Also, writing out \eqref{eqn:solutions} gives the explicit formulas
\begin{equation}\label{eqn:xnUn}
    x_n = \sum_{i=0}^{\lfloor\frac{n}{2}\rfloor} \binom{n}{2i} (Dy_1^2)^ix_1^{n-2i},\quad 
    U_n = \sum_{i=0}^{\lceil\frac{n}{2}\rceil-1} \binom{n}{2i+1} (Dy_1^2)^ix_1^{n-2i-1}.
\end{equation}
Note that we can replace $Dy_1^2 = x_1^2 - 1$ via the Pell equation to obtain univariate polynomials in $x_1$. We write these as polynomials $p_n(x)\in \Z[x]$ such that $x_n = p_n(x_1)$, i.e. we have $p_n= \sum_{i=0}^{\lfloor\frac{n}{2}\rfloor} \binom{n}{2i} (x^2-1)^ix^{n-2i}$ and $\deg(p_n) = n$. Expanding $(x^2-1)^i$ and reordering the terms gives
\begin{equation}\label{eqn:pn}
    p_n(x) = \sum_{i=0}^{\lfloor\frac{n}{2}\rfloor} \sum_{j=0}^i \binom{n}{2i} \binom{i}{j} (-1)^j x^{n-2j}
    = \sum_{j=0}^{\lfloor\frac{n}{2}\rfloor} \left((-1)^j \sum_{i=j}^{\lfloor\frac{n}{2}\rfloor} \binom{n}{2i} \binom{i}{j}\right) x^{n-2j}.
\end{equation}
The first few $p_n$, $n \in \{1,2,\dots,12\}$ are given as follows:
\begin{align*}
    p_1(x) & = x,\\
    p_2(x) & = 2x^2 - 1,\\
    p_3(x) & = x (4x^2 - 3),\\
    p_4(x) & = 8x^4 - 8x^2 + 1,\\
    p_5(x) & = x (16x^4 - 20x^2 + 5),\\
    p_6(x) & = (2x^2 - 1) (16x^4 - 16x^2 + 1),\\
    p_7(x) & = x (64x^6 - 112x^4 + 56x^2 - 7),\\
    p_8(x) & = 128x^8 - 256x^6 + 160x^4 - 32x^2 + 1,\\
    p_9(x) & = x(4x^2 - 3)(64x^6 - 96x^4 + 36x^2 - 3),\\
    p_{10}(x) & = (2x^2 - 1) (256x^8 - 512x^6 + 304x^4 - 48x^2 + 1),\\
    p_{11}(x) & = x (1024x^{10} - 2816x^8 + 2816x^6 - 1232x^4 + 220x^2 - 11),\\
    p_{12}(x) & = (8x^4 - 8x^2 + 1) (256x^8 - 512x^6 + 320x^4 - 64x^2 + 1).
\end{align*}
Similarly, Equation~\eqref{eqn:xnUn} gives an explicit formula for polynomials $u_n(x)\in \Z[x]$ of degree $n-1$ such that $U_n = u_n(x_1)$, namely $u_n(x) = \sum_{i=0}^{\lceil\frac{n}{2}\rceil-1} \binom{n}{2i+1} (x^2-1)^ix^{n-2i-1}$. As above, by expanding $(x^2-1)^i$ and reordering, $u_n(x)$ can be rewritten as
\begin{equation}\label{eqn:un}
    u_n(x) = \sum_{j=0}^{\lceil\frac{n}{2}\rceil - 1} \left((-1)^j \sum_{i=j}^{\lceil\frac{n}{2}\rceil - 1} \binom{n}{2i+1} \binom{i}{j}\right) x^{n-2j-1}.
\end{equation}
Here are the first $12$ polynomials $u_n(x)$:
\begin{align*}
    u_1(x) & = 1,\\
    u_2(x) & = 2x,\\
    u_3(x) & = (2x - 1)(2x + 1),\\
    u_4(x) & = 4x(2x^2 - 1),\\
    u_5(x) & = (4x^2 - 2x - 1)(4x^2 + 2x - 1),\\
    u_6(x) & = 2x(2x - 1)(2x + 1)(4x^2 - 3),\\
    u_7(x) & = (8x^3 - 4x^2 - 4x + 1)(8x^3 + 4x^2 - 4x - 1),\\
    u_8(x) & = 8x(2x^2 - 1)(8x^4 - 8x^2 + 1),\\
    u_9(x) & = (2x - 1)(2x + 1)(8x^3 - 6x - 1)(8x^3 - 6x + 1),\\
    u_{10}(x) & = 2x(4x^2 - 2x - 1)(4x^2 + 2x - 1)(16x^4 - 20x^2 + 5),\\
    u_{11}(x) & = (32x^5 - 16x^4 - 32x^3 + 12x^2 + 6x - 1)(32x^5 + 16x^4 - 32x^3 - 12x^2 + 6x + 1),\\
    u_{12}(x) & = 4x(2x - 1)(2x + 1)(2x^2 - 1)(4x^2 - 3)(16x^4 - 16x^2 + 1).
\end{align*}
Using the above polynomials, one can easily write down the $n$-th solution to the Pell equation~\eqref{eqn:pell} in terms of the fundamental solution $(x_1, y_1)$ as 
\begin{equation}
    (x_n, y_n) = (p_n(x_1), y_1u_n(x_1)). 
\end{equation}

Because the $p_n$ and $u_n$ are directly derived from \eqref{eqn:solutions} and \eqref{eqn:pell}, we obtain the following lemma stating some divisibility properties.

\begin{lemma}\label{lem:pnun}
Let $p_n(x), u_n(x)\in \Z[x]$ be the above polynomials such that $x_n = p_n(x_1)$, $U_n = u_n(x_1)$ for $0 < n \in \Z$, $\deg(p_n) = n$ and $\deg(u_n) = n-1$.   
\begin{enumerate}
    \item[(a)] If $n$ is odd, then $x \mid p_n(x)$. 
    \item[(b)] If $n = km$ for $0 < k, m \in \Z$, then $p_n(x) = p_m(p_k(x))$ and therefore, if $m$ is odd, $p_k(x) \mid p_n(x)$. 
    \item[(c)] Furthermore, $u_n(x) = u_k(x) u_m(p_k(x))$. In particular, if $k\mid n$, then $u_k \mid u_n$.
    \item[(d)] The leading coefficients of $p_n(x)$ and $u_n(x)$ are both equal to $2^{n-1}$.
    \item[(e)] If $n=2^k$, $k\geq 1$, then $p_n(x)$ is irreducible in $\Z[x]$.
\end{enumerate} 
\end{lemma}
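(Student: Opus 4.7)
The plan is to obtain parts (a)--(d) directly from the explicit formulas \eqref{eqn:pn} and \eqref{eqn:un} together with the multiplicative structure of Pell solutions, and to handle (e) by identifying $p_n$ with the Chebyshev polynomial of the first kind and invoking cyclotomic Galois theory. For (a), when $n$ is odd every exponent $n-2j$ appearing in \eqref{eqn:pn} is odd, so $p_n(x)$ has no constant term and $x\mid p_n(x)$. For (d), the coefficient of $x^n$ in \eqref{eqn:pn} (the $j=0$ contribution) is $\sum_{i=0}^{\lfloor n/2\rfloor}\binom{n}{2i}=2^{n-1}$, and an analogous computation in \eqref{eqn:un} gives $\sum_{i=0}^{\lceil n/2\rceil-1}\binom{n}{2i+1}=2^{n-1}$ for $u_n$; alternatively, both claims follow by induction from the three-term recursion $p_{n+1}(x)=2x\,p_n(x)-p_{n-1}(x)$ (and its analogue for $u_n$) coming from $\alpha+\beta=2x_1$, $\alpha\beta=1$.

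For (b) and (c), the key observation is the identity
\[
(x_1+y_1\sqrt{D})^{km}=\bigl((x_1+y_1\sqrt{D})^k\bigr)^m=(x_k+y_k\sqrt{D})^m,
\]
which exhibits $(x_{km},y_{km})$ as the $m$-th iterate of $(x_k,y_k)$. Since $(x_k,y_k)$ still satisfies $x_k^2-Dy_k^2=1$, the derivations in \eqref{eqn:xnUn} apply verbatim with $(x_k,y_k)$ in place of $(x_1,y_1)$, yielding $x_{km}=p_m(x_k)=p_m(p_k(x_1))$ and $y_{km}=y_k\,u_m(x_k)$; dividing the latter by $y_1$ gives $U_{km}=U_k\,u_m(x_k)=u_k(x_1)\,u_m(p_k(x_1))$. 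As these numerical identities hold for infinitely many values of $x_1$ (one per non-square $D$), the corresponding polynomial identities $p_{km}(x)=p_m(p_k(x))$ and $u_{km}(x)=u_k(x)\,u_m(p_k(x))$ hold in $\Z[x]$. When $m$ is odd, part (a) gives $x\mid p_m(x)$, so substituting $x\mapsto p_k(x)$ yields $p_k(x)\mid p_{km}(x)$, proving (b); claim (c) is immediate from the factorization of $u_{km}$.

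The main obstacle is (e). The recursion together with $p_0=1$ and $p_1(x)=x$ shows $p_n(\cos\theta)=\cos(n\theta)$, so $p_n$ coincides with the Chebyshev polynomial $T_n$ of the first kind. The $2^k$ roots of $p_{2^k}$ are therefore $\cos\bigl((2j-1)\pi/2^{k+1}\bigr)$ for $j=1,\dots,2^k$; setting $\zeta=\zeta_{2^{k+2}}$, each root has the form $(\zeta^a+\zeta^{-a})/2$ with $a=2j-1$ ranging over the odd residues modulo $2^{k+2}$, i.e.\ the units of $\Z/2^{k+2}\Z$, with $a$ and $-a$ producing the same root. The Galois group $\mathrm{Gal}(\Q(\zeta)/\Q)\cong(\Z/2^{k+2}\Z)^\times$ acts transitively on those units by multiplication, hence transitively on the $2^k$ roots. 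So the minimal polynomial of any such root has degree $2^k=\deg p_{2^k}$, which proves that $p_{2^k}$ is irreducible over $\Q$; and since the constant term $p_{2^k}(0)=\cos(2^{k-1}\pi)=\pm 1$ already forces the content of $p_{2^k}$ to be $1$, Gauss's lemma promotes this to irreducibility in $\Z[x]$.
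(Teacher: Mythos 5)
Your proof is correct. Parts (a)--(d) follow the same route as the paper: (a) and (d) read off the explicit formulas \eqref{eqn:pn} and \eqref{eqn:un}, and (b), (c) come from splitting $(x_1+y_1\sqrt{D})^{km}=\bigl((x_1+y_1\sqrt{D})^k\bigr)^m$ and rerunning the derivation with $(x_k,y_k)$ in place of $(x_1,y_1)$; your device of passing from numerical identities at infinitely many $x_1$ to identities in $\Z[x]$ is a legitimate substitute for the paper's remark that the same computation can be carried out formally in $x$. Where you genuinely diverge is (e). The paper proves irreducibility of $p_{2^k}$ by an elementary $2$-adic analysis of the coefficients (showing $2^{2^k-2j}\mid a_j$ by induction via $p_{2^{k+1}}=2p_{2^k}^2-1$), then applies Eisenstein at $p=2$ to the monic polynomial $2p_{2^k}(x/2)$. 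You instead recognize $p_n$ as the Chebyshev polynomial $T_n$, locate the roots of $p_{2^k}$ as $(\zeta^a+\zeta^{-a})/2$ for $\zeta$ a primitive $2^{k+2}$-th root of unity and $a$ odd, and use the transitive action of $\mathrm{Gal}(\Q(\zeta)/\Q)\cong(\Z/2^{k+2}\Z)^\times$ on the $2^k$ distinct roots to force the minimal polynomial to have full degree, finishing with Gauss's lemma (the constant term $\pm1$ gives primitivity). Both arguments are complete; the paper's is self-contained and elementary, while yours is shorter once the Chebyshev identification $p_n(\cos\theta)=\cos(n\theta)$ is made and has the added benefit of exhibiting the splitting field and the roots explicitly, at the cost of importing the irreducibility of cyclotomic polynomials of $2$-power order (or at least the degree of $\Q(\zeta_{2^{k+2}})/\Q$).
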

\begin{proof}
Statement (a) follows from the explicit formula for $x_n$ in \eqref{eqn:xnUn}. Note that if $n$ is odd, the exponent $n-2i$ of $x_1$ is always at least $1$. 

If $n$ factors as $n=km$, then the exponentiation in Equation~\eqref{eqn:solutions} can be split up into two parts, $x_n + y_n \sqrt{D} = (x_1 + y_1 \sqrt{D})^n = ((x_1 + y_1 \sqrt{D})^k)^m = (x_k + y_k \sqrt{D})^m$. Writing out the last exponentiation by $m$ in terms of $x_k$ and $y_k$ explicitly gives $x_n + y_n\sqrt{D} = p_m(x_k) + u_m(x_k) y_k \sqrt{D} = p_m(x_k) + u_m(x_k) y_1 u_k(x_1) \sqrt{D}$. Now writing $x_k$ in terms of $x_1$ gives $p_m(p_k(x_1)) + u_m(p_k(x_1)) y_1 u_k(x_1) \sqrt{D}$. The exact same calculations can be carried out as polynomials in $x$ instead of expressions in $x_1$, which gives $x_n(x) + y_n(x)\sqrt{D} = p_m(p_k(x)) + y_1 u_k(x) u_m(p_k(x))\sqrt{D}$, which proves (b) and (c).

Expression~\eqref{eqn:pn} shows that the leading coefficient of $p_n(x)$ is $\sum_{i=0}^{\lfloor\frac{n}{2}\rfloor}\binom{n}{2i} = \binom{n}{0} + \binom{n}{2} + \dots + \binom{n}{2\lfloor\frac{n}{2}\rfloor} = 2^{n-1}$. From~\eqref{eqn:un}, one obtains the leading coefficient of $u_n(x)$ as $\sum_{i=0}^{\lceil\frac{n}{2}\rceil - 1}\binom{n}{2i+1} = \binom{n}{1} + \binom{n}{2} + \dots + \binom{n}{2\lceil\frac{n}{2}\rceil - 1} = 2^{n-1}$, and thus (d) holds. 

To prove (e), we first take a look at the coefficients of $p_{2^k}(x)$ when we write it as $p_{2^k}(x) = \sum_{j=0}^{2^{k-1}} a_j x^{2^k - 2j} = a_0x^{2^k} + a_1x^{2^k-2} + \dots + a_{2^{k-1}-1}x^2 + a_{2^{k-1}}$ with 
$$a_j = (-1)^j \sum_{i=j}^{2^{k-1}} \binom{2^k}{2i} \binom{i}{j}$$ 
as in \eqref{eqn:pn}. Note that $\lfloor n/2\rfloor = 2^{k-1}$. The constant coefficient is $a_{2^{k-1}} = 1$ for $k>1$ and $a_{2^{k-1}} = a_1 = -1$ for $k=1$. As we have seen in (d), the leading coefficient is $a_0 = \sum_{i=0}^{2^{k-1}}\binom{2^k}{2i} = 2^{2^k-1}$.

We now show by induction on $k$ that, for all coefficients $a_j$ with $1\leq j \leq 2^{k-1}$, $2^{2^k-2j}$ divides $a_j$. For the polynomial $p_2(x) = 2x^2 -1$ ($k=1$), this is true. For $k = 2$, we have $p_4(x) = 8x^4 - 8x^2 + 1$ and $a_1 = -8$ is divisible by $4 = 2^{2^2-2}$. For the induction step, note that $p_{2^{k+1}}(x) = 2(p_{2^k}(x))^2 - 1$ by the recursive formula for $x_{2n}$ discussed right after Equation~\eqref{eqn:Un}. If we write $(p_{2^{k}}(x))^2 = \sum_{j=0}^{2^{k}} b_j x^{2^{k+1} - 2j} = b_0x^{2^{k+1}} + b_1x^{2^{k+1}-2} + \dots + b_{2^{k}-1}x^2 + b_{2^{k}}$, then 
$$
b_j = \sum_{\substack{0\leq l,m \leq 2^{k-1}\\ l+m = j}}a_la_m.
$$
If $lm \neq 0$, then $2^{2^k-2m} \mid a_m$ and $2^{2^k-2l}\mid a_l$, so $2^{2^{k+1}- 2(l+m)}\mid a_la_m$, i.e. if $l+m=j$, then $2^{2^{k+1}- 2j}\mid a_la_m$. If one of $l$ or $m$ is $0$, the term involves $a_0=2^{2^k-1}$. In all $b_j$ with $j>0$, $a_0$ only occurs in products with $a_j$, which appear twice. Therefore, in these sums, the two terms can be combined to $2a_0a_j$ and it follows again that $2^{2^{k+1}-2j} \mid 2a_0a_j$. Overall, this shows that $2^{2^{k+1}-2j} \mid b_j$. Since $p_{2^{k+1}} = 2b_0x^{2^{k+1}} + 2b_1x^{2^{k+1}-2} + \dots + 2b_{2^{k}-1}x^2 + (2b_{2^{k}}-1)$, the divisibility conditions still hold, which concludes our induction proof.  

Now consider the polynomial $q(x) = 2p_{2^k}(x/2)$. The leading coefficient cancels exactly, making $q(x)$ monic. The constant term in $p_{2^k}(x)$ is equal to $1$, which means that the constant term in $q(x)$ is equal to $2$. The property we just proved ensures that all other coefficients of $p_{2^k}(x/2)$ are integers. Thus $q(x) = x^{2^k} + c_1x^{2^k-2} + \dots + c_{2^{k-1}-1}x^2 + 2$ for even integers $c_1, \dots, c_{2^{k-1}-1}$. The Eisenstein irreducibility criterion with the prime $p=2$ shows that $q(x)$ is irreducible in $\Z[x]$. Then, $p_{2^k}(x/2) = q(x)/2$ is irreducible in $\Q[x]$, and so is $p_{2^k}(x)$. Because $p_{2^k}(x)\in \Z[x]$ is primitive, it is also irreducible in $\Z[x]$. 
\qed
\end{proof}
This lemma shows that $p_n(x)$ is irreducible if and only if $n = 2^k$ for some $k\geq 1$.

\subsection{Finding the fundamental solution via continued fractions}\label{sec:cf}
In light of the above, solving the Pell equation is reduced to finding its fundamental solution $(x_1, y_1)$. This can be done via continued fraction expansions as we explain next.

We focus on simple continued fractions (CF) for square roots, which are expressions of the form
$$\sqrt{D} = a_0 + \cfrac{1}{a_1 + \cfrac{1}{a_2 + \cfrac{1}{a_3 + \cfrac{1}{a_4 + \dots}}}}$$
for $D$ a squarefree integer and a sequence of integers $a_0, a_1, \dots$, called the quotients, where $a_i > 0$ for all $i$. 

Every rational number can be expressed in two different ways via CF, and every irrational number has a unique CF representation that is an infinite sequence (\cite{unger}, 2.2); $\sqrt{D}$ falls in this second case.
We form the sequence of convergents $\frac{A_i}{B_i}$ to $\sqrt{D}$ by truncating the CF expansion after a finite number of terms. Specifically, the $i$th convergent is given by
$$a_0 + \cfrac{1}{a_1 + \cfrac{1}{\ddots + \cfrac{1}{a_i}}}.$$
These convergents can also be computed by the recurrence relations
\begin{eqnarray*}
A_i & = a_i A_{i-1} + A_{i-2},\\
B_i & = a_i B_{i-1} + B_{i-2},
\end{eqnarray*}
beginning with the values $A_0 = a_0$, $A_1 = a_1 a_0 + 1$, $B_0 = 1$, and $B_1 = a_1$ (\cite{unger}, 2.4). An important observation is that the numerators and denominators $A_i$ and $B_i$ of these convergents are monotonically increasing; further because the $a_i$ are positive integers, these sequences are bounded below by the Fibonacci sequence.

The quotients $a_i$ can be found almost directly from the definition by recursively reciprocating and rounding down, however, the following recurrence relations (\cite[Theorem 10.19]{Rosen}) give a faster implementation.
\begin{align*}
\alpha_k & = \left\lfloor\frac{P_k+\sqrt{D}}{Q_k}\right\rfloor,\\
P_{k+1} & = a_k Q_{k}-P_{k},\\
Q_{k+1} & = \frac{d - P_{k+1}^2}{Q_k},
\end{align*}
where $Q_0 = 1, P_0 = 0$.

The connection between continued fractions and the Pell equation is given by the following theorem (\cite[Theorem 3.3]{jacobson-williams}).

\begin{theorem}\label{thm:Continued_Fractions}
Let $D>0$ be an integer that is not a perfect square. Then if $(x, y)$ is a solution to the Pell equation $x^2 - Dy^2 = 1$ then $\frac{x}{y}$ is a convergent in the continued fraction expansion for $\sqrt{D}$.
\end{theorem}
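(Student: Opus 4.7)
The plan is to prove the theorem in the classical way, by showing that any positive solution $(x,y)$ to the Pell equation gives a rational approximation $x/y$ to $\sqrt{D}$ that is close enough to force $x/y$ to appear as a convergent. Concretely, I would invoke (or briefly quote from \cite{unger} or \cite{Rosen}) Legendre's theorem: if $p/q$ is a rational number in lowest terms with $q \geq 1$ and $\left|\sqrt{D} - p/q\right| < 1/(2q^2)$, then $p/q$ is a convergent of the simple continued fraction expansion of $\sqrt{D}$. Given this tool, the entire proof reduces to verifying the hypothesis for $p/q = x/y$.

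The first step is to reduce to the case $x, y > 0$: solutions come in the four sign variants $(\pm x, \pm y)$, and since convergents of $\sqrt{D}$ are positive, it suffices to handle the positive case. The trivial solution $(\pm 1, 0)$ is not covered by the statement $x/y$ anyway, so I assume $x, y > 0$. Next, I verify $\gcd(x,y) = 1$: any common divisor $d$ of $x$ and $y$ satisfies $d^2 \mid x^2 - Dy^2 = 1$, forcing $d = 1$, so $x/y$ is already in lowest terms.

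The key analytic step is to bound $|x/y - \sqrt{D}|$. Factor the Pell relation as
\begin{equation*}
(x - y\sqrt{D})(x + y\sqrt{D}) = 1,
\end{equation*}
so that $x - y\sqrt{D} = 1/(x + y\sqrt{D})$, and dividing by $y$ gives
\begin{equation*}
\frac{x}{y} - \sqrt{D} = \frac{1}{y(x + y\sqrt{D})}.
\end{equation*}
Since $x^2 = 1 + Dy^2 > Dy^2$ and $x > 0$, we have $x > y\sqrt{D}$, hence $x + y\sqrt{D} > 2y\sqrt{D} \geq 2y$ (as $D \geq 2$ because $D$ is a non-square positive integer). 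Therefore
\begin{equation*}
0 < \frac{x}{y} - \sqrt{D} < \frac{1}{2y^2\sqrt{D}} \leq \frac{1}{2y^2}.
\end{equation*}

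With this inequality in hand, Legendre's theorem immediately yields that $x/y$ is a convergent of the continued fraction expansion of $\sqrt{D}$, completing the proof. The only nontrivial piece of machinery is Legendre's theorem itself, which I would cite rather than re-prove; its proof is a classical fact about continued fractions and is given, for instance, in the references \cite{unger} and \cite{Rosen} already mentioned in the section. So the real content of the argument reduces to the short chain of inequalities above.
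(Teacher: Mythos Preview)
Your argument is correct: the reduction to $x,y>0$, the coprimality check, the factorization $(x-y\sqrt{D})(x+y\sqrt{D})=1$, and the resulting bound $|x/y-\sqrt{D}|<1/(2y^2)$ are all sound, and Legendre's criterion then applies as stated.

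As for comparison with the paper: there is essentially nothing to compare. The paper does not give a proof of this theorem at all; it simply quotes it as \cite[Theorem~3.3]{jacobson-williams} and moves on, consistent with the remark at the start of \S\ref{sec:pell} that proofs of the classical Pell-equation facts are omitted. So your write-up supplies what the paper deliberately leaves to the literature. The route you take---bounding the approximation error and invoking Legendre's theorem---is exactly the standard textbook proof (and is the one found in the cited references), so there is no methodological divergence, only a difference in level of detail.
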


This theorem suggests that to find the fundamental solution, we generate the quotients $a_i$ with our recurrence relations, using these to generate $A_i$ and $B_i$ for each $i$, and check whether $A_i$ and $B_i$ satisfy the Pell equation as $x$ and $y$. Because a fundamental solution exists, this algorithm always terminates. We will only be interested in solutions to the Pell equation that lie in a given range; this additional restriction and our bounds on the monotonically increasing sequence $A_i$ will ensure that this algorithm is very fast. Specifically, if we give an upper bound $\Tilde{M}$ on the size of $x$ in the solution to the Pell equation that we want to find, then we can cut off our algorithm after the first convergent $A_i$ is generated that is larger than $\Tilde{M}$. Because the $A_i$ are bounded below by the Fibonnaci sequence, this means that our algorithm will find any solution with $x \leq \Tilde{M}$ in time proportional to log($\Tilde{M}$). In our search for cryptographically large twin smooth pairs by solving Pell equations, $\Tilde{M}$ is $2^{258}$, and a single core running our C-code implementing this algorithm can find any solutions with $x \leq \Tilde{M}$ for tens of thousands of Pell equations a second.

\section{Twin smooth integers as solutions of Pell equations}\label{sec:twins}
Lehmer states the correspondence between $B$-smooth twins and certain solutions to Pell equations in Theorem 1 of \cite{lehmer}. The theorem is more general because it allows a specific finite set of primes as the prime factors of the twin smooth integers in question. But we are mainly interested in capturing $B$-smoothness, so we state it here specialized to consider all primes not exceeding $B$.
\begin{theorem}[Lehmer~\cite{lehmer}, Thm.~1]\label{thm:lehmer}
Let $B>2$ and $0< m\in \Z$. Then $(m, m+1)$ is a pair of twin $B$-smooth integers if and only if there exists a $B$-smooth integer $y$ and a square-free $B$-smooth integer $\Delta\neq 2$ such that $(x_n, y_n)$ with $x_n = 2m+1$, $y_n = y$ is the $n$-th solution to the Pell equation
\begin{equation}\label{eqn:twin_pell}
    x^2 - 2\Delta y^2 = 1
\end{equation}
with $1 \leq n \leq \max\{3, (q+1)/2\}$, where $q$ is the largest prime not exceeding $B$. 
\end{theorem}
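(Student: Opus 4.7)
The statement has two directions. For the ``if'' direction, suppose $(x_n,y_n)=(2m+1,y)$ solves $x^2-2\Delta y^2=1$ with $\Delta$ square-free $B$-smooth, $\Delta\neq 2$, and $y$ $B$-smooth. Rearranging $(2m+1)^2-1=2\Delta y^2$ gives $2m(m+1)=\Delta y^2$, and a short parity case split (either $\Delta$ is odd, forcing $y$ to be even, or $\Delta=2\Delta'$ with $\Delta'$ odd square-free) shows that $m(m+1)$ divides a $B$-smooth integer, hence is $B$-smooth.

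Conversely, given a twin $B$-smooth pair $(m,m+1)$ with $m\geq 1$, write $2m(m+1)=\Delta y^2$ with $\Delta$ square-free and $y>0$ integer. Both $\Delta$ and $y$ divide $2m(m+1)$ and hence are $B$-smooth. The condition $\Delta\neq 2$ follows because $m$ and $m+1$ are coprime and cannot both be perfect squares for $m\geq 1$. Then $2\Delta$ is non-square, so $(2m+1,y)$ is a positive solution of $x^2-2\Delta y^2=1$ and, by Theorem~\ref{thm:Pell_Solutions}, equals the $n$-th solution $(x_n,y_n)$ for a unique $n\geq 1$.

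The technical heart is the bound $n\leq\max\{3,(q+1)/2\}$. Writing $y_n=y_1 U_n$ with $U_n=u_n(x_1)$ as in Section~\ref{sec:pell}, $B$-smoothness of $y_n$ forces that of $U_n$. For $n$ outside a small exceptional set---covered by Carmichael's primitive-divisor theorem applied to the Lucas sequence with $P=2x_1$, $Q=1$ and discriminant $8\Delta y_1^2>0$---there exists a primitive prime divisor $p$ of $U_n$, necessarily $p\leq q$ and odd (for $n\geq 3$, since $2\mid U_2$). Let $d$ denote the multiplicative order of $\alpha=x_1+y_1\sqrt{D}$ in $\mathbb{F}_p^*$ or $\mathbb{F}_{p^2}^*$ depending on whether $p$ splits in $\Q(\sqrt{D})$. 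The relation $\alpha\beta=1$, equivalently $\alpha/\beta=\alpha^2$, yields $p\mid U_n\iff d\mid 2n$ and hence $\omega(p)=d/\gcd(d,2)$. Combined with the law of apparition $d\mid p-\left(\frac{D}{p}\right)$, this gives $n=\omega(p)\leq(p+1)/2\leq(q+1)/2$ in both parities of $d$: an odd $d$ dividing the even number $p\pm 1$ already satisfies $d\leq(p+1)/2$, while an even $d=2n$ satisfies $2n\leq p+1$. The ``$\max\{3,\cdot\}$'' absorbs the small $n$ and the handful of Carmichael-exceptional indices, which must be checked to lie in the allowed range.

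The main obstacle is this factor-of-2 improvement over the generic Zsygmondy-style bound $n\leq q+1$: supplying it hinges on the norm-one structure $\alpha\beta=1$ specific to Pell-equation units combined with the parity argument on $d$, and requires explicit verification that each Carmichael-exceptional small $n$ still falls within the allowed range.
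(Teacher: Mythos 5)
The paper itself does not prove this theorem: it is imported verbatim (in specialized form) from Lehmer, and the only piece re-derived in the text is the bijection between twin $B$-smooth pairs and coefficient--solution triples, which appears later as Lemma~\ref{lem:twin_pell_corr}. Your first two paragraphs match that lemma's argument almost exactly (squarefree/square decomposition of $2m(m+1)$, the observation that $\Delta=2$ would force $m$ and $m+1$ to be consecutive positive squares, and Theorem~\ref{thm:Pell_Solutions} to place $(2m+1,y)$ as an $n$-th solution), so that half is both correct and the same route. For the bound $n\leq\max\{3,(q+1)/2\}$ -- which the paper never proves -- your reconstruction is sound and is in substance Lehmer's own: a primitive prime divisor $p$ of $U_n$ must divide the $B$-smooth $y_n$, hence $p\leq q$, and the norm-one relation $\beta=\alpha^{-1}$ converts $p\mid U_n$ into $\mathrm{ord}(\alpha)\mid 2n$, which together with $\mathrm{ord}(\alpha)\mid p-\left(\tfrac{D}{p}\right)$ and the parity split on $\mathrm{ord}(\alpha)$ yields $n=\omega(p)\leq(p+1)/2$; this is precisely the factor-of-two gain over the naive Zsygmondy-type bound that the statement requires.

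The one genuinely unfinished step is the last one you flag yourself: the indices $n$ for which Carmichael's theorem does not guarantee a primitive divisor (certain $n\leq 12$, plus $n=1,2,6$ unconditionally) are not actually checked, and for small $q$ the window $\max\{3,(q+1)/2\}$ can be as small as $3$, so these cases carry real content rather than being absorbed automatically. They do go through -- e.g.\ with $P=2x_1\geq 6$, $Q=1$ one can verify directly that $U_3=(2x_1-1)(2x_1+1)$ already has a primitive divisor since $\gcd(U_3,U_2)=1$ and $\gcd(U_3,D)\mid 3$, and similar explicit computations dispose of $4\leq n\leq 12$ -- but as written your argument asserts rather than establishes this, so a complete proof would either carry out that finite verification or cite the tabulated list of Carmichael exceptions and observe that none arise for $P=2x_1$, $Q=1$ with $x_1\geq 3$.
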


\begin{remark}\label{rem:xoddyeven}
If $\Delta$ in Equation~\eqref{eqn:twin_pell} is odd then the equation is a Pell equation in the sense of \S\ref{sec:pell} with a square-free, positive integer $D=2\Delta$. But Theorem~\ref{thm:lehmer} allows $\Delta$ to be even, i.e. in which case $\Delta = 2D$ for an odd, square-free, positive $D$. In this case, a solution $(x_n',y_n')$ to Equation~\eqref{eqn:twin_pell} corresponds to the solution $(x_n=x_n', y_n=2y_n')$ of the Pell equation $x^2 - Dy^2 = 1$, i.e. here $D = \Delta/2$.

Furthermore, Equation~\eqref{eqn:twin_pell} clearly implies that $x_n$ must be odd for any solution $(x_n,y_n)$. So, if $x_n = 2m+1$, then $\Delta y_n^2 = (x_n^2-1)/2 = 2m(m+1)$ is divisible by $4$. Since $\Delta$ is square-free, $y_n$ must be even.
\end{remark}

If $2 = q_1, q_2, \dots, q_t \in \Z$ are all the prime numbers that are not larger than $B$, listed consecutively in ascending order, then Lehmer's theorem shows that in order to find all twin $B$-smooth pairs, it suffices to solve $2^t-1$ Pell equations of the form~\eqref{eqn:twin_pell}, because that is the number of possible values $\Delta$ can take in Theorem~\ref{thm:lehmer}. Indeed, following Lehmer's notation, let $Q$ denote the set of positive $B$-smooth integers, i.e.
\[Q = \{q_1^{\alpha_1}\cdot q_2^{\alpha_2}\cdot \ldots \cdot q_t^{\alpha_t} \mid 0 \le \alpha_i \in \Z,\ 1 \le i \le t\},\]
and let $Q'$ be the subset of square-free elements of $Q$ except $2$, i.e.
\[Q' = \{q_1^{\epsilon_1}\cdot q_2^{\epsilon_2}\cdot \ldots \cdot q_t^{\epsilon_t} \mid \epsilon_i \in \{0,1\},\ 1 \le i \le t\} \setminus \{2\}.\]
Then $Q'$ is the set of possible values of $\Delta$ that can occur in Theorem~\ref{thm:lehmer} and has $2^t-1$ elements.

Theorem~\ref{thm:lehmer} gives an explicit correspondence between twin smooth pairs $(m, m+1)$ and Pell equation coefficients $\Delta\in Q'$ together with solutions $(x,y) \in \Z_{>0}\times Q$. 
Define the set
\[P = \{(\Delta, x, y) \in Q'\times \Z_{>0}\times Q \mid x^2 - 2\Delta y^2 = 1\}\]
 of triples $(\Delta, x, y)$ of such coefficient-solution triples and the set
\[T = \{m \in Q \mid m(m+1) \in Q\}\subset Q\]
of elements $m$ that describe all $B$-smooth twins.
The correspondence is given in the following lemma, which is simply a reformulation of the proof of Theorem~\ref{thm:lehmer} as given in \cite{lehmer}.
\begin{lemma}\label{lem:twin_pell_corr}
Let the sets $P$ and $T$ be defined as above. The map 
$\delta: P \rightarrow T,\ (\Delta, x, y) \mapsto m = (x-1)/2$ 
is a bijection.
\end{lemma}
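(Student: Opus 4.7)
The plan is to prove the bijection by unpacking the three standard components: well-definedness of $\delta$, injectivity, and surjectivity. The core algebraic identity driving everything is that $x^2 - 1 = 2\Delta y^2$ factors as $(x-1)(x+1) = 2\Delta y^2$, so setting $m = (x-1)/2$ yields $4m(m+1) = 2\Delta y^2$, i.e.\ $2m(m+1) = \Delta y^2$. I will use this in both directions.

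For well-definedness, I first note that the Pell equation forces $x$ to be odd (since $2\Delta y^2$ is even, $x^2$ is odd), so $m = (x-1)/2$ is an integer. Since $y \in Q$ means $y \geq 1$ and $\Delta \geq 1$, we must have $x \geq 2$, hence $x \geq 3$ and $m \geq 1$. The identity $2m(m+1) = \Delta y^2$ with $\Delta, y \in Q$ shows $2m(m+1)$ is $B$-smooth, so $m(m+1)$ is $B$-smooth; in particular $m$ divides a $B$-smooth integer and is therefore itself in $Q$. This places $m \in T$.

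For injectivity, if $(\Delta_1, x_1, y_1)$ and $(\Delta_2, x_2, y_2)$ both map to $m$, then $x_1 = x_2 = 2m+1$ and the Pell equation gives $\Delta_1 y_1^2 = \Delta_2 y_2^2$. Since $\Delta_1, \Delta_2$ are square-free positive integers, comparing exponents of each prime modulo $2$ on both sides forces $\Delta_1 = \Delta_2$, and then positivity of $y_i$ gives $y_1 = y_2$.

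For surjectivity, given $m \in T$, I set $x = 2m+1$, let $\Delta$ be the square-free part of $2m(m+1)$, and define $y$ by $y^2 = 2m(m+1)/\Delta$. Then $(x,y)$ solves $x^2 - 2\Delta y^2 = 1$, $\Delta$ is square-free, both $\Delta$ and $y$ inherit $B$-smoothness from $2m(m+1)$, and $x \in \Z_{>0}$. The one subtle point, which I expect to be the only real obstacle, is verifying $\Delta \neq 2$ so that $\Delta \in Q'$. If $\Delta = 2$, then $m(m+1) = y^2$; but $\gcd(m, m+1) = 1$ forces both $m$ and $m+1$ to be perfect squares, which is impossible for $m \geq 1$ since no two positive consecutive integers are both squares. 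Hence $\Delta \in Q'$ and the constructed triple lies in $P$ with $\delta$-image $m$, completing the bijection.
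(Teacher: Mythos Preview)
Your proof is correct and follows essentially the same route as the paper: both arguments hinge on the square-free/square decomposition of $2m(m+1)$ (equivalently $4m(m+1)$) to pass between triples $(\Delta,x,y)$ and integers $m$, with the paper packaging this as an explicit inverse map while you split it into well-definedness, injectivity, and surjectivity. Your treatment is in fact slightly more careful, since you explicitly verify $\Delta \neq 2$ via the observation that $m(m+1)$ cannot be a perfect square for $m\geq 1$, a point the paper's proof leaves implicit.
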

\begin{proof}
Let $(\Delta, x, y) \in P$. As noted in Remark~\ref{rem:xoddyeven}, $x$ must be odd and $y$ must be even. Hence both $m = (x-1)/2$ and $y/2$ are integers, as is $m+1 = (x+1)/2$. This means that $m\cdot(m+1) = (x^2-1)/4 = 2\Delta (y/2)^2 \in Q$ because $\Delta, y \in Q$, showing that $\delta(\Delta, x, y)$ lies in $T$.

Conversely, given $m\in T$ representing the twin $B$-smooth pair $(m, m+1)$, we have $m(m+1) \in Q$ and we can uniquely write 
$m(m+1) = 2q_1^{\alpha_1}\cdot q_2^{\alpha_2}\cdot \ldots \cdot q_t^{\alpha_t}$,
where $\alpha_i = \epsilon_i + 2\beta_i$ for $\epsilon_i\in \{0,1\}$ and integers $\alpha_i,\beta_i \ge 0$. Setting $x=2m+1$, $y = 2q_1^{\beta_1}\cdot q_2^{\beta_2}\cdot \ldots \cdot q_t^{\beta_t}$ and $\Delta = q_1^{\epsilon_1}\cdot q_2^{\epsilon_2}\cdot \ldots \cdot q_t^{\epsilon_t}$, we see that $x^2-1 = 4m(m+1) = 2\Delta y^2$. Therefore, given $m$, we obtain $x=2m+1$ and unique $\Delta$ and $y$ by the square-free and square parts of $4m(m+1)$ via $x^2 - 1 = 2\Delta y^2$, so we can define $\delta^{-1}$ by $\delta^{-1}(m) = (\Delta, x, y)\in P$. We have $\delta\circ\delta^{-1}(m) = m$ for all $m\in T$ and $\delta^{-1}\circ\delta(\Delta, x, y) = (\Delta, x, y)$ for all $(\Delta, x, y) \in P$, which implies the lemma.\qed
\end{proof}
Any solution $(x, y)$ to Equation~\eqref{eqn:twin_pell} is the $n$-th solution for some positive integer $n$ as defined in Theorem~\ref{thm:Pell_Solutions}. Lehmer~\cite[Thm.~1]{lehmer} showed that $n \leq M = \max\{3, (q_t+1)/2\}$. Therefore, the set $P$ is finite. 
An algorithm to compute all twin $B$-smooth pairs is thus to enumerate all coefficients $\Delta\in Q'$, compute the fundamental solution of the corresponding Pell equation $x^2 - 2\Delta y^2 = 1$ and then to compute all $n$-th solutions from that for $n\leq M$. To check whether any of those solutions yields a pair of twin $B$-smooth integers, it is sufficient to check whether the $y$-component of the solution is $B$-smooth. Clearly, this requires solving $2^t - 1$ Pell equations and quickly becomes infeasible for even moderate values of $B$.

\subsection{Twin smooth integers from $n$-th solutions for $n>1$}\label{sec:successive}
Next we consider for which $n$ the $n$-th solution to a Pell equation leads to a twin smooth pair via the above correspondence. We start by showing that if the $n$-th solution leads to a twin smooth pair then so does the fundamental solution. In other words, an $n$-th solution for $n>1$ can only correspond to a twin smooth pair if the corresponding fundamental solution corresponds to a twin smooth pair.
\begin{lemma}\label{lem:nth-to-1st-pair}
Let $\Delta\in Q'$, $(x_1, y_1)$ be the fundamental solution and $(x_n, y_n)$ the $n$-th solution to $x^2 - 2\Delta y^2 = 1$. Let $m_1 = (x_1-1)/2$ and $m_n = (x_n - 1)/2$. If $m_n \in T$, then so is $m_1$.
\end{lemma}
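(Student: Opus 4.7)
The plan is to translate the hypothesis on $m_n$ into a smoothness statement about the $y$-coordinate of the Pell solution, use the divisibility $y_1 \mid y_n$ that comes from $y_n = y_1 U_n$, and then translate back to a smoothness statement about $m_1(m_1+1)$.

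First I would rewrite the Pell equation $x_n^2 - 2\Delta y_n^2 = 1$ as $(x_n-1)(x_n+1) = 2\Delta y_n^2$, which (dividing by $4$) gives $2 m_n(m_n+1) = \Delta y_n^2$. The hypothesis $m_n \in T$ says that $m_n(m_n+1) \in Q$, i.e.\ is $B$-smooth; together with $\Delta \in Q'\subset Q$, this forces $y_n^2$ and hence $y_n$ to be $B$-smooth.

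Second I would invoke the identity $y_n = y_1 U_n$ from equation~\eqref{eqn:Un}, which yields the divisibility $y_1 \mid y_n$ already observed right after that equation. Since any divisor of a $B$-smooth integer is itself $B$-smooth, $y_1$ is $B$-smooth as well.

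Third I would apply the same computation as in the first step to the fundamental solution: $2 m_1(m_1+1) = \Delta y_1^2$. Both factors on the right are $B$-smooth, so $m_1(m_1+1)$ is $B$-smooth, and in particular both $m_1$ and $m_1+1$ lie in $Q$. Since $(x_1,y_1)$ is a nontrivial positive solution we have $x_1 \ge 3$, so $m_1 \ge 1$ and the statement $m_1 \in T$ makes sense. There is essentially no obstacle here; the entire content is the single divisibility $y_1 \mid y_n$ from Theorem~\ref{thm:Pell_Solutions}, with the Pell equation itself providing the dictionary between smoothness of $m_k(m_k+1)$ and smoothness of $y_k$.
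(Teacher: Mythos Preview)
Your proof is correct and follows essentially the same route as the paper's: both arguments hinge on the single divisibility $y_1\mid y_n$ (from $y_n=y_1U_n$) together with the Pell-equation identity $2m_k(m_k+1)=\Delta y_k^2$ that translates between smoothness of $m_k(m_k+1)$ and smoothness of $y_k$. The paper merely packages this translation into the bijection $\delta$ of Lemma~\ref{lem:twin_pell_corr}, whereas you spell it out explicitly; the content is identical.
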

\begin{proof}
Since $m_n \in T$, we have $\delta^{-1}(m_n) = (\Delta, x_n, y_n) \in P$ and therefore, $y_n \in Q$. We have seen in \S\ref{sec:pell} that $y_1 \mid y_n$, so $y_1 \in Q$, which means that $(\Delta, x_1, y_1) \in P$. Then $ m_1 = \delta(\Delta, x_1, y_1) \in T$. \qed 
\end{proof}
This lemma shows that when solving Pell equations to find twin smooth pairs, an equation can be discarded if the fundamental solution does not yield a twin smooth pair and there is no need to check any of the successive solutions $(x_n, y_n)$, $n>1$ for that particular value of $\Delta$. From now on, let $m_n = (x_n-1)/2$ for $n\geq 1$, where $(x_n, y_n)$ is the $n$-th solution to the Pell equation~\eqref{eqn:twin_pell}.

Whenever the fundamental solution $(x_1, y_1)$ yields a twin smooth pair $(m_1, m_1+1)$, successive solutions might yield one as well. To decide whether for a fixed $(\Delta, x_1, y_1) \in P$ we also have $(\Delta, x_n, y_n)\in P$, it is sufficient to check whether $U_n \in Q$ because  then $y_n = U_ny_1 \in Q$. For that purpose, we can simply write $U_n=v_n(m_1)$ as a polynomial in $m_1$, where $v_n(x) = u_n(2x+1)$. Then $U_n = u_n(x_1) = u_n(2m_1+1) = v_n(m_1)$. These expressions for $n\leq 12$ are shown here:
\begin{align*}
v_1(m) & = 1,\\
v_2(m) & = 2(2m + 1),\\
v_3(m) & = (4m + 1)(4m + 3),\\
v_4(m) & = 4(2m + 1)(8m^2 + 8m + 1),\\
v_5(m) & = (16m^2 + 12m + 1)(16m^2 + 20m + 5),\\
v_6(m) & = 2(2m + 1)(4m + 1)(4m + 3)(16m^2 + 16m + 1),\\
v_7(m) & = (64m^3 + 80m^2 + 24m + 1)(64m^3 + 112m^2 + 56m + 7),\\
v_8(m) & = 8(2m + 1)(8m^2 + 8m + 1)(128m^4 + 256m^3 + 160m^2 + 32m + 1),\\
v_9(m) & = (4m + 1)(4m + 3)(64m^3 + 96m^2 + 36m + 1)(64m^3 + 96m^2 + 36m + 3),\\
v_{10}(m) & = 2(2m + 1)(16m^2 + 12m + 1)(16m^2 + 20m + 5)(256m^4 + 512m^3 + 304m^2 + 48m + 1),\\
v_{11}(m) & = (1024m^5 + 2304m^4 + 1792m^3 + 560m^2 + 60m + 1)\\
& \quad\cdot (1024m^5 + 2816m^4 + 2816m^3 + 1232m^2 + 220m + 11),\\
v_{12}(m) & = 4(2m + 1)(4m + 1)(4m + 3)(8m^2 + 8m + 1)(16m^2 + 16m + 1)\\
& \quad\cdot (256m^4 + 512m^3 + 320m^2 + 64m + 1).
\end{align*}
For the $n$-th solution to give a twin $B$-smooth pair, $U_n = v_n(m_1)$ must be smooth, where $(m_1, m_1+1)$ is the twin smooth pair given by the fundamental solution. This means that each of the irreducible terms in the polynomial representation of $v_n$ needs to evaluate to a $B$-smooth value at $m_1$. In the same way as above, we may write polynomials for the potential twin-smooth pair $(m_n, m_n+1)$ corresponding to the $n$-th solution, namely, we have that
\begin{equation}\label{eqn:mn}
m_n = (p_n(2m_1+1)-1)/2.
\end{equation}
In the proof of Lemma~\ref{lem:twin_pell_corr} we have seen that $m_1(m_1+1) = (x_1^2-1)/4 = 2\Delta (y_1/2)^2$ and $m_n(m_n+1) = (x_n^2-1)/2 = 2\Delta (y_n/2)^2$. Since $y_n = y_1U_n$, this means that 
\begin{equation}\label{eqn:mnmnp1}
    m_n (m_n+1) = m_1 (m_1+1) U_n^2 = m_1 (m_1+1) v_n(m_1)^2.
\end{equation}
The following lemma gives a more specific description about how $m_1$ and $m_1+1$ divide $m_n$ and $m_n+1$.
\begin{lemma}
Let $\Delta\in Q'$, $(x_1, y_1)$ be the fundamental solution and $(x_n, y_n)$ the $n$-th solution to $x^2 - 2\Delta y^2 = 1$. Let $m_1 = (x_1-1)/2$ and $m_n = (x_n - 1)/2$. Then $m_1 \mid m_n$. 
If $n$ is even, then $(m_1+1) \mid m_n$, if it is odd, then $(m_1+1) \mid (m_n+1)$. 
\end{lemma}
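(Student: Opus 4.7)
The plan is to translate everything into a statement about the polynomial $p_n$, which we already know satisfies $x_n = p_n(x_1)$. Writing $x_1 = 2m_1 + 1$ and $x_n = 2m_n + 1$, the claim $m_1 \mid m_n$ is equivalent to $2m_1 \mid p_n(2m_1 + 1) - 1$, and the claim about $m_1 + 1$ becomes a statement modulo $2(m_1 + 1) = 2m_1 + 2$. So the whole lemma reduces to evaluating $p_n$ modulo $2m_1$ and modulo $2m_1 + 2$, which in turn means evaluating $p_n$ at $1$ and at $-1$.

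The key ingredient is therefore the identities $p_n(1) = 1$ and $p_n(-1) = (-1)^n$ for all $n \geq 1$. Both can be read directly off equation~\eqref{eqn:xnUn}: substituting $x_1 = \pm 1$ into $x_1^2 - 1$ kills every term with $i \geq 1$, leaving only $x_n = \binom{n}{0} x_1^n = (\pm 1)^n$. (Equivalently, one can note that the $p_n$ coincide with the Chebyshev polynomials of the first kind, which take the values $(\pm 1)^n$ at $\pm 1$.) Either derivation is a one-line check rather than a genuine obstacle.

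Given these values, each divisibility is immediate. Since $2m_1 + 1 \equiv 1 \pmod{2m_1}$ and $p_n \in \Z[x]$, we get $p_n(2m_1 + 1) \equiv p_n(1) = 1 \pmod{2m_1}$, so $2m_1 \mid p_n(2m_1+1) - 1 = 2m_n$, proving $m_1 \mid m_n$. Similarly, $2m_1 + 1 \equiv -1 \pmod{2(m_1+1)}$ gives $p_n(2m_1+1) \equiv (-1)^n \pmod{2(m_1+1)}$. For even $n$ this reads $2(m_1+1) \mid p_n(2m_1+1) - 1 = 2m_n$, i.e.\ $(m_1 + 1) \mid m_n$; for odd $n$ it reads $2(m_1+1) \mid p_n(2m_1+1) + 1 = 2(m_n + 1)$, i.e.\ $(m_1 + 1) \mid (m_n + 1)$. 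The only thing to watch out for is the parity bookkeeping in this last step, which matches the statement of the lemma exactly.
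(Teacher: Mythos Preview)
Your proof is correct and follows essentially the same route as the paper: both arguments reduce to showing $p_n(2m_1+1)\equiv 1\pmod{2m_1}$ and $p_n(2m_1+1)\equiv(-1)^n\pmod{2(m_1+1)}$, and both extract this from the representation $p_n(x)=\sum_i\binom{n}{2i}(x^2-1)^i x^{n-2i}$ by noting that $x_1^2-1=4m_1(m_1+1)$ kills the $i\ge 1$ terms modulo each of $2m_1$ and $2(m_1+1)$. Your phrasing via $p_n(1)=1$ and $p_n(-1)=(-1)^n$ makes the mechanism slightly more explicit, but it is the same computation.
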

\begin{proof}
Note that $x_1^2-1 = 4m_1(m_1+1)$ and therefore, the explicit formula for $p_n$ in \S\ref{sec:pell} shows that $p_n(2m_1+1) \equiv 1 \mod (2m_1)$. It follows that $2m_1 \mid p_n(2m_1+1) - 1$ and thus $m_1 \mid (p_n(2m_1+1)-1)/2 = (x_n-1)/2 = m_n$.

We also see that $p_n(2m_1+1) \equiv (-1)^n \mod (2(m_1+1))$ such that $2(m_1+1) \mid p_n(2m_1+1)-1$ if $n$ is even and $2(m_1+1) \mid p_n(2m_1+1)+1$ if $n$ is odd.
\qed
\end{proof}

Below are the expressions for $(m_n, m_n+1)$ with $1\leq n \leq 12$.
\begin{align*}
(m_2, m_2+1) & = (4 m (m + 1),\ (2m + 1)^2),\\
(m_3, m_3+1) & = (m (4m + 3)^2,\ (m + 1) (4m + 1)^2),\\
(m_4, m_4+1) & = (16 m (m + 1) (2m + 1)^2,\ (8m^2 + 8m + 1)^2),\\
(m_5, m_5+1) & = (m (16m^2 + 20m + 5)^2,\ (m + 1) (16m^2 + 12m + 1)^2),\\
(m_6, m_6+1) & = (4 m (m + 1) (4m + 1)^2 (4m + 3)^2,\ (2m + 1)^2 (16m^2 + 16m + 1)^2),\\
(m_7, m_7+1) & = (m (64m^3 + 112m^2 + 56m + 7)^2,\ (m + 1) (64m^3 + 80m^2 + 24m + 1)^2),\\
(m_8, m_8+1) & = (64 m (m + 1) (2m + 1)^2 (8m^2 + 8m + 1)^2,\\
 & \qquad (128m^4 + 256m^3 + 160m^2 + 32m + 1)^2),\\
(m_9, m_9+1) & = (m (4m + 3)^2 (64m^3 + 96m^2 + 36m + 3)^2,\\
 & \qquad (m + 1) (4m + 1)^2 (64m^3 + 96m^2 + 36m + 1)^2),\\
(m_{10}, m_{10}+1) & = (4 m (m + 1) (16m^2 + 12m + 1)^2 (16m^2 + 20m + 5)^2,\\
 & \qquad (2m + 1)^2 (256m^4 + 512m^3 + 304m^2 + 48m + 1)^2),\\
(m_{11}, m_{11}+1) & = (m (1024m^5 + 2816m^4 + 2816m^3 + 1232m^2 + 220m + 11)^2,\\
 & \qquad (m + 1) (1024m^5 + 2304m^4 + 1792m^3 + 560m^2 + 60m + 1)^2),\\
(m_{12}, m_{12}+1) & = (16 m (m + 1) (2m + 1)^2 (4m + 1)^2 (4m + 3)^2 (16m^2 + 16m + 1)^2,\\
 & \qquad (8m^2 + 8m + 1)^2 (256m^4 + 512m^3 + 320m^2 + 64m + 1)^2).
\end{align*}

\begin{remark}
Lemma~\ref{lem:pnun}(b) shows that as soon as $n$ has an odd factor, the polynomial $p_n(x)$ used to represent $x_n = p_n(x_1)$ is reducible. This also means that $2m_n + 1 = p_n(2m_1+1)$ as a polynomial in $m_1$ is reducible and cannot assume a prime value for any $m_1$. Only $p_{2^k}(x)$ is irreducible for any $k\geq 1$ according to Lemma~\ref{lem:pnun}(e). Therefore, when we are interested in finding twin smooth pairs $(m_n, m_n+1)$, where $p = m_n + (m_n+1)$ is prime, only those solutions need to be checked where $n = 2^k$ is a power of $2$.
\end{remark}

Since pairs $(m_n, m_n+1)$ are polynomial expressions in the corresponding $m_1$, one can easily obtain bounds on the size of $m_n$ in terms of the size of $m_1$. For example, equations \eqref{eqn:mn} and \eqref{eqn:mnmnp1} allow to deduce a simple lower bound for $m_n$ in terms of the leading coefficient of $p_n$ or $v_n$. We obtain the following lemma.
\begin{lemma} \label{lemma:nth_solution_size}
Let $\Delta\in Q'$, $(x_1, y_1)$ be the fundamental solution and $(x_n, y_n)$ the $n$-th solution to $x^2 - 2\Delta y^2 = 1$, and let $m_1 = (x_1-1)/2$, $m_n = (x_n - 1)/2$, $U_n = y_n/y_1$ and $v_n(x) \in \Z[x]$ the above polynomial such that $U_n = v_n(m_1)$. Then
\[
m_1 v_n(m_1) \leq m_n < (m_1+1) v_n(m_1).
\]
It follows that $m_n > 4^{n-1}m_1^n$.
\end{lemma}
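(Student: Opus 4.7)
The plan is to derive both the bounds and the exponential lower bound directly from equation~\eqref{eqn:mnmnp1}, which already expresses $m_n(m_n+1)$ as the product of the two quantities $m_1 v_n(m_1)$ and $(m_1+1)v_n(m_1)$ that bound $m_n$.

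First I would set $A = m_1 v_n(m_1)$ and $B = (m_1+1)v_n(m_1)$, so that \eqref{eqn:mnmnp1} becomes $m_n(m_n+1) = AB$, with $B - A = v_n(m_1) \geq 1$ (and in particular $B \geq A+1$). I claim $A \leq m_n < B$. Assuming $m_n \leq A-1$ would give $m_n(m_n+1) \leq (A-1)A < A^2 + A v_n(m_1) = AB$, contradicting the equality. Assuming $m_n \geq B$ would give $m_n(m_n+1) \geq B(B+1) > B \cdot A = AB$, again a contradiction. These two cases together yield $A \leq m_n < B$, which is the first display of the lemma.

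For the second statement, I would use that $v_n(x) = u_n(2x+1)$, so combined with Lemma~\ref{lem:pnun}(d) the polynomial $v_n$ has degree $n-1$ and leading coefficient $2^{n-1}\cdot 2^{n-1} = 4^{n-1}$. Moreover, substituting $x = 2m+1$ into the formula $u_n(x) = \sum_i \binom{n}{2i+1}(x^2-1)^i x^{n-2i-1}$ shows that $v_n(m)$ has non-negative coefficients, since $x = 2m+1$ and $x^2 - 1 = 4m(m+1)$ both have non-negative coefficients in $m$. Hence $v_n(m_1) \geq 4^{n-1} m_1^{n-1}$, and multiplying by $m_1$ and using the first bound gives $m_n \geq m_1 v_n(m_1) \geq 4^{n-1} m_1^n$; the inequality becomes strict for $n \geq 2$ because $v_n$ then has further positive coefficients beyond the leading term (and for $n=1$ the claim reduces to $m_1 = m_1$, which one should interpret as non-strict).

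The only place where there is any genuine work is verifying the strict inequality $m_n < B$ from $m_n(m_n+1) = AB$, which is routine once $B \geq A + 1$ is observed. The rest is essentially bookkeeping on coefficients of $v_n$ combined with~\eqref{eqn:mnmnp1}, so I do not expect any real obstacle.
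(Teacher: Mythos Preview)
Your proposal is correct and follows essentially the same route as the paper: both arguments start from equation~\eqref{eqn:mnmnp1}, sandwich $m_n$ between $m_1 v_n(m_1)$ and $(m_1+1)v_n(m_1)$, and then use that $v_n$ has leading coefficient $4^{n-1}$ with all coefficients non-negative. The only cosmetic difference is that the paper obtains the sandwich via $m_n^2 < m_n(m_n+1) < (m_n+1)^2$ and taking square roots, whereas you argue by contradiction on $m_n \leq A-1$ or $m_n \geq B$; your remark on the $n=1$ edge case for the strict inequality is also accurate.
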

\begin{proof}
We use Equation~\eqref{eqn:mnmnp1} to see that $m_n^2 < m_n (m_n+1) = m_1 (m_1+1) v_n(m_1)^2 < (m_1+1)^2 v_n(m_1)^2$, which yields $m_n < (m_1+1) v_n(m_1)$. Similarly, $(m_n+1)^2 > m_n (m_n+1) = m_1 (m_1+1) v_n(m_1)^2 > m_1^2 v_n(m_1)^2$ and therefore $m_n \geq m_1 v_n(m_1)$. Because $v_n(x) = u_n(2x+1)$, $\deg(u_n) = n-1$ and the leading coefficient of $u_n$ is $2^{n-1}$ according to Lemma~\ref{lem:pnun}, the leading coefficient of $v_n$ is $(2^{n-1})^2 = 4^{n-1}$. The explicit formulas for $u_n$ in Section~\ref{sec:pell} show that all coefficients of $v_n(x)$ are positive and therefore, $v_n(m_1) > 4^{n-1}m_1^{n-1}$, which concludes the proof.\qed
\end{proof}

The last inequality in the lemma provides an upper bound on $m_1$ if we are given an upper bound on $m_n$. Assume, we are interested in twin smooth integers given by $m_n$ corresponding to an $n$-th solution to the Pell equation with $n>1$ with an upper bound $m_n < R$. Such pairs must then come from a fundamental solution pair given by $m_1$ such that $4^{n-1}m_1^n < R$, or, equivalently, $m_1 < \frac{1}{4}(4R)^{1/n}$.
For example, to find twin smooth integers where $m_n$ has at most $b$ bits, we have $R = 2^b$, i.e. $m_1 < R_b :=  2^{(b-2n+2)/n}$. Table~\ref{tbl:m1bounds} shows concrete values for the maximal possible bitsize of $m_1$ when $b=256$ and $n\in \{1,2,\dots,12\}$.

\begin{table}[]
    \centering
    \renewcommand{\tabcolsep}{0.2cm}
    \renewcommand{\arraystretch}{1.5}
    \begin{tabular}{c|cccccccccccc}
        \toprule
        $n$ &  $1$ & $2$ & $3$ & $4$ & $5$ & $6$ & $7$ & $8$ & $9$ & $10$ & $11$ & $12$\\
        $\lceil 258/n \rceil - 2$ & $256$ & $127$ & $84$ & $63$ & $50$ & $41$ & $35$ & $31$ & $27$ & $24$ & $22$ & $20$\\
        \bottomrule
    \end{tabular}
    \vspace{0.3cm}
    \caption{Maximal bitsizes $\lceil\log(R_{256})\rceil = \lceil 258/n \rceil - 2$ of $m_1$ such that the corresponding $m_n$ has at most $256$ bits for $n = 1, 2, \dots, 12$.}
    \label{tbl:m1bounds}
\end{table}

Further we have the following absolute upper bound on $n$ for which we can have twin smooth pairs corresponding to the $n$-th solution to a Pell equation (see \cite{lehmer}, Theorem 1).

\begin{theorem}
If $(m, m+1)$ is a pair of twin $B$-smooth integers corresponding to the $n$-th solution to a Pell equation, then $n \leq \frac{B+1}{2}$.
\end{theorem}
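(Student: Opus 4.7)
The plan is to reduce the statement to one about the Lucas quantity $U_n$ and then apply rank-of-apparition theory for Lucas sequences, following the approach of \cite{lehmer}.

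First, by Remark~\ref{rem:xoddyeven} and Lemma~\ref{lem:twin_pell_corr}, a $B$-smooth twin pair arising from the $n$-th solution forces $y_n \in Q$; combined with Lemma~\ref{lem:nth-to-1st-pair}, which also places $y_1$ in $Q$, and the relation $y_n = y_1 U_n$, this shows that the ratio $U_n = y_n/y_1$ is $B$-smooth. The task is therefore to bound $n$ under the assumption that every prime divisor of $U_n$ is at most $B$.

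Next I would set up the Lucas machinery. With $\alpha, \beta$ as in Section~\ref{sec:pell}, the sequence $U_n = (\alpha^n - \beta^n)/(\alpha-\beta)$ satisfies the recurrence $U_{n+1} = 2x_1 U_n - U_{n-1}$ with $U_1 = 1$. For any prime $q$ coprime to $2\Delta y_1$, the rank of apparition $r(q)$ (the smallest index with $q \mid U_{r(q)}$) exists, divides $q - \left(\tfrac{2\Delta}{q}\right)$ (Legendre symbol), and satisfies $q \mid U_m$ if and only if $r(q) \mid m$. The Bilu--Hanrot--Voutier primitive divisor theorem (refining classical work of Bang, Zsygmondy, and Carmichael) asserts that, outside an explicit finite list of small exceptions, every $U_n$ admits a \emph{primitive} prime divisor $q$, i.e. one with $r(q) = n$. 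Such $q$ satisfies $n \mid q - \varepsilon$ for some $\varepsilon \in \{\pm 1\}$, giving $q \equiv \pm 1 \pmod{n}$ and in particular $q \geq n - 1$.

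The third step sharpens $q \geq n - 1$ to $q \geq 2n - 1$ by a parity argument. Reducing the Lucas recurrence modulo $2$ shows that $U_n$ is odd precisely when $n$ is odd, so for odd $n \geq 3$ the primitive prime divisor $q$ is odd; since both $n - 1$ and $n + 1$ are then even, the congruence $q \equiv \pm 1 \pmod{n}$ immediately forces $q \geq 2n - 1$. For even $n$, the identity $U_n = 2 x_{n/2} U_{n/2}$ supplies a primitive prime $q$ of the top layer $x_{n/2}$ whose index of first appearance in the full sequence is exactly $n$; ruling out the candidates $q \in \{n-1, n+1\}$ is done by tracking $\left(\tfrac{2\Delta}{q}\right)$ against the requirement $r(q) = n$, which forces these values to have rank strictly less than $n$, contradicting primitivity.

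Combining these bounds gives $B \geq q \geq 2n - 1$, which is equivalent to $n \leq (B+1)/2$. The main obstacle is the even case: the naive primitive-divisor bound yields only $q \geq n - 1$, and improving it to $2n - 1$ hinges on the finer interaction between the Legendre symbol and the rank of apparition. The finitely many Bilu--Hanrot--Voutier exceptions are checked by hand using the explicit polynomials $u_n(x)$ and $v_n(m)$ listed in Section~\ref{sec:pell}.
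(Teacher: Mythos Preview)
The paper does not actually prove this statement: both here and in Theorem~\ref{thm:lehmer} it simply quotes the bound from Lehmer \cite{lehmer} without reproducing the argument. So there is no in-paper proof to compare against; your proposal is effectively a reconstruction of Lehmer's original proof.

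Your overall strategy is indeed Lehmer's: reduce to the $B$-smoothness of $U_n = y_n/y_1$ and then exhibit a primitive prime divisor $q$ of $U_n$ with $q \geq 2n-1$. The reduction step is correct, and the odd-$n$ parity argument (that $U_n$ is odd, so $q$ is odd, so $q \neq n\pm 1$) is clean and complete.

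The even-$n$ case is where the real content lies, and your sketch does not yet close it. From $q \mid x_{n/2}$ and the Pell relation one obtains $2\Delta y_{n/2}^2 \equiv -1 \pmod q$, hence $\bigl(\tfrac{2\Delta}{q}\bigr) = \bigl(\tfrac{-1}{q}\bigr)$. Combining this with the requirement $n \mid q - \epsilon$ excludes $q \in \{n-1,n+1\}$ only when $n \equiv 2 \pmod 4$; when $4 \mid n$ the Legendre-symbol constraint is compatible with both candidates and nothing is ruled out. Your sentence ``ruling out the candidates \dots is done by tracking $\bigl(\tfrac{2\Delta}{q}\bigr)$'' is therefore an assertion rather than an argument, and it is precisely the nontrivial part of the proof. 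Lehmer closes this gap via a sharper rank-of-apparition statement specific to the $Q=\alpha\beta=1$ setting (the multiplicative order of $\alpha$ modulo $q$ yields a congruence modulo $2n$, not merely modulo $n$), combined with iterating the factorisation $U_{2k} = 2x_k U_k$. You would need to supply that step explicitly before the even case stands.
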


This upper bound is in many ways much weaker than the results of Lemma~\ref{lemma:nth_solution_size} because Lemma~\ref{lemma:nth_solution_size} shows that for large $n$, the $n$-th solutions to Pell equations under our bound all correspond to fundamental solutions that are exponentially smaller. Together, this suggests an algorithm for finding all smooth pairs in an interval corresponding to $n$-th solutions to Pell equations for large $n$. 
The algorithm works as shown in Algorithm~\ref{algorithm}.

\begin{algorithm}
\begin{algorithmic}[1]
\State Fix a lower bound $s$ on $n$.
\For {$n \in \{s, s+1, \dots, \frac{B+1}{2}$\}}
    \State Compute $T = \lceil M/n \rceil - 2$ as in Table~\ref{tbl:m1bounds}.
    \For {$w \leq T$}
        \If {$(w, w+1)$ is a twin $B$-smooth pair}
            \If {$v_n(w)$ is $B$-smooth}
                \State Record $(m_n, m_n+1)$ as a twin $B$-smooth pair.
            \EndIf
        \EndIf
    \EndFor
\EndFor
\end{algorithmic}
\caption{Algorithm to search for twin $B$-smooth pairs of size at most $M$ bits.}\label{algorithm}
\end{algorithm}


The key benefit of this algorithm is that it finds smooth pairs of size roughly $2^M$ while only searching through integers up to $2^T$, an exponentially smaller range. In our running example of searching for pairs in the range $[2^{240}, 2^{256}]$, we ran the above algorithm with a lower bound of $s = 6$. Then the most costly iteration above with $n = 6$ only required checking approximately $2^{41}$ possibilities for $w$. Notice further that as $n$ gets larger, the possibilities for $w$ grow extremely restricted, and the algorithm terminates quickly. The overall runtime is dominated by the iteration with $n = s$.

For example, our run with $s = 6$ and a maximum of $M = 2^{275}$ found the following smooth pairs. A $16759$-smooth, $245$-bit pair $(m, m+1)$ with
\begin{align*}
m & = 44746808406030847930450201970587971020922341276429366152081686798603000000\\
  & = 2^6 \cdot 3^4 \cdot 5^6 \cdot 7^5 \cdot 11^2 \cdot 13^2 \cdot 41 \cdot 43^2 \cdot 53 \cdot 97 \cdot 241^2 \cdot 337 \cdot 509 \cdot 673 \cdot 4703^2\\ 
  & \quad \cdot 5981 \cdot 9413^2 \cdot 13669^2 \cdot 16759^2,\\
m+1 & = 31^2 \cdot 157^4 \cdot 181^2 \cdot 251^2 \cdot 349^2 \cdot 359^2 \cdot 457^2 \cdot 1427^2 \cdot 2617^2 \cdot 9109^2 \cdot 9649^2 \cdot 10253^2\\
\end{align*}
and a $24551$-smooth, $260$-bit pair with
{\small
\begin{align*}
m & = 1248045507865502270977250845951694434798578493856490782548653674169732908101560\\ 
 & = 2^3 \cdot 3^5 \cdot 5 \cdot 7^2 \cdot 17 \cdot 19 \cdot 29 \cdot 31 \cdot 43^2 \cdot 53^2 \cdot 149^2 \cdot 211^2 \cdot 227^2 \cdot 233 \cdot 827 \cdot 919^2 \cdot 2659 \cdot 4723^2\\
 & \quad \cdot 6907^2 \cdot 10831 \cdot 16691^2 \cdot 24551,\\
m+1 & = 11^2 \cdot 13^2 \cdot 23^2 \cdot 71^2 \cdot 107^2 \cdot 263^2 \cdot 587^2 \cdot 1021^2 \cdot 4297^2 \cdot 6491^2 \cdot 7309^2 \cdot 8089^2 \cdot 9049^2 \cdot 19009^2.
\end{align*}}
This second pair is larger than any previously known for its smoothness bound. 

\subsection{Lower order solutions}

The above algorithm finds all twin-smooth pairs corresponding to $n$-th solutions for $n$ large. In order to complete our approach, we now consider algorithms for finding pairs corresponding to Pell equation solutions with $n$ small. In this case, the bounds from Lemma~\ref{lemma:nth_solution_size} are too weak, and the algorithm from the previous section is intractable. Therefore, we search for such smooth pairs directly by solving Pell equations via continued fractions to obtain the fundamental solution, and then checking the successive solutions for small $n$ with our polynomials $v_n$.

The obvious difficulty with the approach is that there are exponentially many Pell equations which could have corresponding smooth pairs, and we cannot solve all of them. While we believe (see discussion) that this difficulty may be intrinsic, and that finding all pairs may not be tractable, we can still cleverly choose which Pell equations to solve in order to find some large twin-smooth pairs.

Recall that each of the $2^t - 1$ Pell equations $x^2 - Dy^2 = 1$ with potential corresponding smooth pairs has a unique coefficient $D = 2 \Delta$ where $\Delta$ is a smooth, square-free integer and $\Delta \neq 2$. Given any particular $\Delta$, solving the corresponding Pell equation can be done extremely quickly (see \ref{sec:cf}). Choosing which Pell equations to solve means choosing the values of $\Delta$. Generally, the difficulty is in finding $\Delta$s for which the size of the solution to the Pell equation is in the desired range. Fundamental solutions to Pell equations can be extremely large (exponential in $\Delta$), and such large solutions $(x,y)$ are very unlikely to have smooth $y$, and are thus not helpful to our search. It is therefore beneficial to choose coefficients $\Delta$ for which we expect the resulting pairs to generally be small.

\subsection{The smallest coefficient heuristic}\label{sec:smallCoeff}
A very simple heuristic for choosing $\Delta$ is to choose the smallest values possible. This is beneficial for a simple combinatorial reason. Note that if $(x,y)$ is a solution to the Pell equation
$$x^2 - 1= 2 \Delta y^2$$
then $y = \sqrt{\frac{x^2 - 1}{2 \Delta}}$ is an integer, and as we have seen in Section~\ref{sec:twins}, it must be smooth if $(x,y)$ corresponds to a smooth pair. Now, if $x$ lies in a certain range, such as $[2^{240}, 2^{256}]$ in our running example, then a small $\Delta$ means that there is a larger range for possible values of $y$. Thus we expect that having smaller $\Delta$ increases the probability that some $y$ solving the Pell equation will meet our criteria. Assuming that this is the only property of $\Delta$ affecting the chance of having a solution in the desired range, the probability of having a solution scales like $\frac{1}{\sqrt{\Delta}}$. Since $\Delta$ can range from $3$ to the square of the maximum bound on the size of smooth pairs we are interested in ($2^{512}$ in our example), the effect of this combinatorial heuristic is very strong.

This reasoning agrees with experiments. In our search for pairs in the range $[2^{240}, 2^{256}]$, we solved all Pell equations with $\Delta \leq 2^{44}$. This computation ran in parallel on a 64-core computer for 3 days and found a $19949$-smooth, $215$-bit pair $(m, m+1)$ corresponding to a fundamental solution with
\begin{align*}
m & = 51963397732665557125190357543988479960188331933248699616266017360\\
& = 2^4 \cdot 3^5 \cdot 5 \cdot 7 \cdot 11^2 \cdot 647 \cdot 911 \cdot 919^2 \cdot 1103^2 \cdot 2099^2 \cdot 2203^2 \cdot 2423^2 \cdot 5279^2 \cdot 8641^2 \cdot 19949,\\
m+1 & = 19^2 \cdot 23^2 \cdot 47^2 \cdot 277^2 \cdot 359^2 \cdot 419^2 \cdot 541^2 \cdot 887^2 \cdot 1993^2 \cdot 4549^2 \cdot 4813^2 \cdot 12721^2.
\end{align*}
While this pair is comparatively small to many others found, it is different in that it corresponds directly to a fundamental solution. Certain cryptographic applications require smooth pairs $(m, m + 1)$ for which the sum $m + (m + 1) = 2m + 1$ is prime. This property is not achievable by pairs corresponding to $n$-th solutions unless $n = 2^k$ for some integer $k$ by Lemma \ref{lem:pnun}.

Checking the $n$-th solutions of the pairs found using the previously discussed methods for $n \in \{1, 2, \dots, 12\}$ also found a pair
{\small
\begin{align*}
m & = 13751398930221343029252446890555634360426232035770955214339102967231667741706727080\\
 & =  2^3 \cdot 3^5 \cdot 5 \cdot 59 \cdot 103^2 \cdot 113^2 \cdot 1697 \cdot 2381 \cdot 2383^2 \cdot 2399^2 \cdot 3623^2 \cdot 4733 \cdot 9151^2 \cdot 10607^2\\
 & \quad \cdot 16267 \cdot 18059^2 \cdot 18289 \cdot 23603,\\
m+1 & = 11^4 \cdot 17^2 \cdot 53^2 \cdot 83^2 \cdot 109^2 \cdot 227^2 \cdot 263^2 \cdot 347^2 \cdot 373^2 \cdot 599^2 \cdot 2341^2 \cdot 7883^2 \cdot 10223^2 \cdot 10883^2 \cdot 12511^2,
\end{align*}}

which has $273$ bits and is $23603$-smooth. It corresponds to the sixth solution to a Pell equation. This pair, though larger than the target cryptographic size of 256 bits, is strong evidence for the efficiency of this algorithm. It is both larger and more smooth than the second pair presented in \ref{sec:successive}, and is far larger than any previously known for its smoothness bound. This pair is also a nice example of how the methods in \ref{sec:successive} can be combined well with other algorithms to yield $B$-smooth pairs larger than either algorithm could individually.

\subsection{The small primes heuristic}\label{sec:smallPrime}

While the above heuristic of choosing small values of $\Delta$ works well, a limitation of it is that there are relatively few small $\Delta$, and the probability of each Pell equation resulting in smooth pairs in the desired range decreases strongly as $\Delta$ increases. Furthermore, once we relax the restrictions on the maximum size of $\Delta$, exhausting all of its possible values becomes computationally infeasible. Another heuristic that narrows the search space is to choose values of $\Delta$ with smaller prime factors.

Among coefficients $\Delta$ of roughly the same size, those having more (and hence on average smaller) prime factors correspond to Pell equations that are more likely to result in smooth pairs in the desired range. We again give a combinatorial argument. If $p \mid \Delta$, then from the Pell equation, we have
$$x^2 - 1 \equiv 0 \pmod p \iff (x - 1) (x + 1) \equiv 0 \pmod p \iff x \equiv \pm 1 \pmod p$$
because the integers modulo $p$ form a field. Suppose that $\Delta$ is divisible by $n$ primes, say $\Delta = p_1 \cdot p_2 \cdot \ldots \cdot p_n$. Any integer $x$ that solves the Pell equation must satisfy the above condition, so we can associate to it a sequence $s \in \{-1, +1\}^n$, where the $i$th element of the sequence is $x \text{ mod } p_i$.

Now suppose we fix such a sequence $s$, and consider all $x$ in our desired range that are associated to $s$. By the Chinese remainder theorem, there is exactly one such $x$ in every interval of length $\Delta$. This gives further evidence for the above heuristic of choosing $\Delta$ small, because there are more $x$s in our desired range with the potential of solving the Pell equation.

Further, notice that $x$ in a solution to the Pell equation with coefficient $\Delta$ can correspond to any of the length $n$ sequences. Since there are $2^n$ such sequences and there are (up to $\pm 1$) the same number of $x$'s in the range corresponding to each, the implicit search space in which we look for solutions when solving Pell equations with coefficients $\Delta$ of a specified size is much larger when $\Delta$ has more prime factors.

Small searches with this method over coefficients larger than $2^{42}$ yielded a $201$-bit, $29881$-smooth pair 
\begin{align*}
m & = 2317395102010090979961970844394697249269453177012017537196644\\
 & = 2^2 \cdot 13 \cdot 53^2 \cdot 113 \cdot 139^2 \cdot 269^2 \cdot 347^2 \cdot 509^4 \cdot 569^2 \cdot 743^2 \cdot 12823^2 \cdot 14149 \cdot 29881,\\
m+1 & = 3^9 \cdot 5 \cdot 23^2 \cdot 43^2 \cdot 167^2 \cdot 179^2 \cdot 227 \cdot 349^2 \cdot 613^2 \cdot 661^2 \cdot 2927^2 \cdot 4099^2 \cdot 6421^2.
\end{align*}
It corresponds to $\Delta = 13 \cdot 113 \cdot 14149 \cdot 29881$.
This pair is smaller than those given previously, which is expected given the increase in the sizes of $\Delta$ searched over. However, it is still close to cryptographic size, indicating that more precise heuristics could extend our methods to search effectively over large values of $\Delta$.

\section{Discussion}

We have seen how solutions to Pell equations give both a theoretical and practical approach to finding twin smooth integers, including several new algorithms that have produced new pairs. We conclude by discussing two observations for exploration in further research.

Based on experiments, our algorithms are more likely to find smooth pairs $(m, m+1)$ with $2m + 1$ composite, which is undesirable for some applications. Based on Lemma \ref{lem:pnun}, the only pairs that can sum to a prime are those corresponding to $n$-th solutions with $ n = 2 ^ k $ for integer $k$. Generally such pairs are easy to exhaust with the algorithms given in \ref{sec:successive} for $k$ even moderately large; for our example we found that no pairs in $[2^{240}, 2^{256}]$ exist with $k \geq 3$. Therefore, it is interesting to consider whether we can improve our search through fundamental solutions to directly find large pairs more often.

One approach is to consider the structure of the continued fractions expansions. Because it was not useful to our algorithms, we have omitted discussion of many remarkable structures in the continued fractions of square roots, such as the periodicity of the quotients. Because the size of fundamental solutions is determined by the length of the period and size of the quotients of $\sqrt{\Delta}$, one idea is to choose coefficients $\Delta$ by fixing the length of quotients in the corresponding continued fraction expansion.

Unfortunately, while this approach guarantees that we have pairs $(x,y)$ satisfying the Pell equation $x^2 - 1 = 2\Delta y^2$ with $x$ in a given range, however, it does not guarantee smoothness of $y$ or $D$. It is interesting to consider whether a clever way of choosing the period lengths and continued fractions quotients could result in a high probability of $y$ and $D$ being smooth, thus resulting in an efficient algorithm for finding twin smooth pairs. This is a promising approach because of the strong and well-studied structure of continued fractions. An overview of the theory of continued fractions can be found in \cite{jacobson-williams}, chapters 3 and 6.

The tension described above between searching for solutions to Pell equations (or equivalently smooth pairs) with a fixed smoothness bound and searching for solutions (or smooth pairs) in a given size range is common across many algorithms for finding twin smooth integers. This difficulty seems intrinsic, as the problem of predicting the size of the solution to a Pell equation is related to questions about computing class numbers of quadratic fields, which remain highly mysterious.

Solving the Pell equation is equivalent to finding the fundamental unit in the associated real quadratic field $\mathbb{Q}(\sqrt{D})$. Because of the structure of units in its ring of integers, given the fundamental unit $\epsilon_D = t + u\sqrt{D}$, then either $x = t, y = u$ or $x = t^2 + u^2, y = 2tu$ solve the Pell equation $x^2 - D y^2 = 1$. The two cases arise from the fact that the fundamental unit can have norm $\pm 1$. For simplicity we focus on the first case; the analysis is analogous for the second. Since $x \approx \sqrt{D}y$, $\epsilon_D \approx 2x$.

Dirichlet's class number formula for quadratic fields relates the size of the fundamental unit to that of the fields' class number $h(D)$,

$$h(D) = \frac{\mathcal{L}(\chi, 1) \sqrt{D}}{2 \text{log}(\epsilon_D)}.$$

Based on both experimental evidence and theoretical bounds, $\mathcal{L}(\chi, 1)$ grows extremely slowly with $D$, see for example the following plot.

\includegraphics[scale = 0.8]{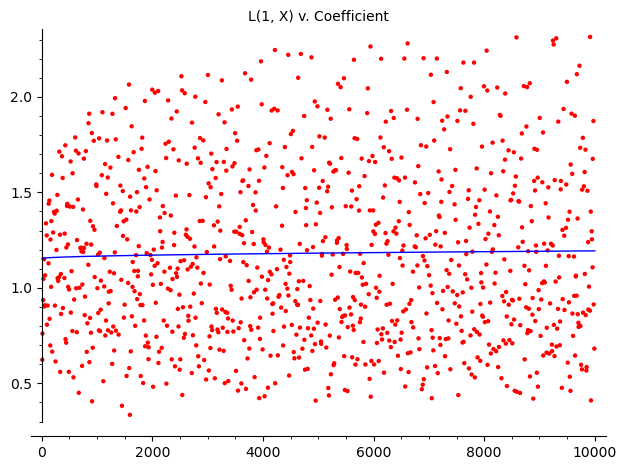}

Therefore if we want to find coefficients $D$ with corresponding $x \approx M$, we want

$$h(D) \approx \frac{\sqrt{D}}{\text{log}(M)}.$$

Since $M$ is fixed, choosing smooth, squarefree coefficients $D$ with Pell equation solutions in the right range is a problem equivalent to choosing $D$ with class number roughly $\sqrt{D}$. This problem is likely very difficult, since little is know about the sizes of class numbers of real quadratic fields. For example, Gauss conjectured in 1801 that there are infinitely many real quadratic fields with class number 1; this conjecture remains open. In the absence of new breakthroughs in predicting the sizes of class numbers, heuristics such as those presented in \ref{sec:smallCoeff} and \ref{sec:smallPrime} appear to be the best way to search for large twin smooth integers corresponding to fundamental solutions.

\bibliography{pell-twins}

\begin{thebibliography}{10}

\bibitem{CHM4SQISign}
Giacomo Bruno, Maria Corte-Real Santos, Craig Costello, Jonathan~Komada
  Eriksen, Michael Naehrig, Michael Meyer, and Bruno Sterner.
\newblock Cryptographic smooth neighbors.
\newblock Cryptology ePrint Archive, Paper 2022/1439, 2022.
\newblock \url{https://eprint.iacr.org/2022/1439}.

\bibitem{pte}
M.~Naehrig C.~Costello, M.~Meyer.
\newblock Sieving for twin smooth integers with solutions to the
  prouhet-tarry-escott problem.

\bibitem{Conrey}
J.~B. Conrey, M.~A. Holmstrom, and T.~L. McLaughlin.
\newblock Smooth neighbors.
\newblock {\em Experimental Mathematics}, 22(2):195--202, 2013.

\bibitem{b-sidh}
C.~Costello.
\newblock {B-SIDH:} supersingular isogeny {Diffie-Hellman} using twisted
  torsion.
\newblock In S.~Moriai and H.~Wang, editors, {\em {ASIACRYPT} 2020}, volume
  12492 of {\em Lecture Notes in Computer Science}, pages 440--463. Springer,
  2020.

\bibitem{sqisign}
L.~De Feo, D.~Kohel, A.~Leroux, C.~Petit, and B.~Wesolowski.
\newblock {SQISign}: Compact post-quantum signatures from quaternions and
  isogenies.
\newblock In Shiho Moriai and Huaxiong Wang, editors, {\em {ASIACRYPT} 2020},
  volume 12491 of {\em Lecture Notes in Computer Science}, pages 64--93.
  Springer, 2020.

\bibitem{jacobson-williams}
M.~J. Jacobson and H.~C. Williams.
\newblock {\em Solving the {Pell} Equation}.
\newblock Springer, 2009.

\bibitem{lehmer}
D.~H. Lehmer.
\newblock On a problem of {S}t{\"o}rmer.
\newblock {\em Illinois Journal of Mathematics}, 8(1):57--79, 1964.

\bibitem{lenstra02}
H.~W. Lenstra.
\newblock Solving the {Pell} equation.
\newblock {\em Notices Amer. Math. Soc.}, 49(2), 2002.

\bibitem{luca-najman-err}
F.~Luca and F.~Najman.
\newblock On the largest prime factor of $x^2-1$.
\newblock {\em Math. Comp.}, 80(273):429--435, 2011.

\bibitem{luca-najman}
F.~Luca and F.~Najman.
\newblock Errata to {"On the largest prime factor of $x^2-1$"}.
\newblock {\em Math. Comp.}, 83(285):337, 2014.

\bibitem{Rosen}
K.~H. Rosen.
\newblock {\em Elementary Number Theory and Its Applications}.
\newblock Addison Wesley Publishing Company, 1984.

\bibitem{stormer}
C.~St{\o}rmer.
\newblock Quelques th{\'e}or{\`e}mes sur l'{\'e}quation de {Pell}
  $x^2-dy^2=\pm1$ et leurs applications.
\newblock {\em Christiania Videnskabens Selskabs Skrifter, Math. Nat. Kl},
  (2):48, 1897.

\bibitem{unger}
J.~Unger.
\newblock Solving {Pell's} equation with continued fractions.
\newblock 2009.

\end{thebibliography}
\bibliographystyle{plain}

\section{Code}

The code can be found at \url{https://github.com/lightningwolf111/twin_smooth_integers}. Large searches were done using the parallelized C implementations and use the \href{https://gmplib.org/}{GMP} library. Experiments with many other algorithms can be found in the repository as well, and were implemented in python using the \href{www.sagemath.org}{SAGE} library.

\end{document}